\newtheorem{theorem}{Theorem}
\newtheorem{lemma}{Lemma}
\newtheorem{remark}{Remark}
\newtheorem{proposition}{Proposition}
\newenvironment{proof}
{\begin{trivlist} \item[]{\bf Proof. }}%
{\hspace*{\fill}$\rule{.3\baselineskip}{.35\baselineskip}$\end{trivlist}}
\begin{document}

\def\myint{\int \!\! d^{ {\scriptscriptstyle D} } \boldsymbol{x} \; }


\title{Bifurcation of gap solitons in periodic potentials \\ with
a sign-varying nonlinearity coefficient}

\author{Juan Belmonte-Beitia$^{a,b}$ and Dmitry Pelinovsky$^{a}$ \\
{\small $^a$ Department of Mathematics, McMaster University, Hamilton ON, Canada, L8S 4K1} \\
{\small $^b$ Departamento de Matem\'aticas, E. T. S. de Ingenieros
Industriales} \\ {\small and Instituto de Matem\'atica Aplicada a
la Ciencia y la Ingenier\'{\i}a (IMACI)} \\{\small Universidad de
Castilla-La Mancha, Ciudad Real, Spain,  13071} }

\date{\today}
\maketitle

\begin{abstract}
We address the Gross--Pitaevskii (GP) equation with a periodic linear potential and
a periodic sign-varying nonlinearity coefficient. Contrary to the claims in the
previous works of Abdullaev {\em et al.} [PRE {\bf 77}, 016604 (2008)]
and Smerzi \& Trombettoni [PRA {\bf 68}, 023613 (2003)], we show that
the intersite cubic nonlinear terms in the discrete nonlinear Schr\"{o}dinger (DNLS) equation
appear beyond the applicability of assumptions of the tight-binding approximation.
Instead of these terms, for an even linear potential and an odd nonlinearity coefficient, 
the DNLS equation and other reduced equations for the semi-infinite gap 
have the quintic nonlinear term, which correctly describes bifurcation of gap solitons.
\end{abstract}

{\bf Keywords:} Gross--Pitaevskii equation, discrete nonlinear
Schr\"{o}dinger equation, gap solitons, bifurcations,
semi-classical limit, Wannier functions.


\section{Introduction}

The generalized DNLS (discrete nonlinear Schr\"{o}dinger) equation with intersite cubic nonlinear terms,
\begin{eqnarray}
\nonumber i \dot{c}_n & = & \alpha (c_{n+1} + c_{n-1}) + \beta |c_n|^2 c_n \\
\nonumber & \phantom{t} & + \gamma (2 |c_n|^2 (c_{n+1} + c_{n-1})
+ c_n^2 (\bar{c}_{n+1} + \bar{c}_{n-1}) + |c_{n+1}|^2 c_{n+1} +
|c_{n-1}|^2 c_{n-1}) \\ \label{DNLS}  & \phantom{t} & + \delta
((c_{n+1}^2 + c_{n-1}^2) \bar{c}_n + 2 (|c_{n+1}|^2 + |c_{n-1}|^2)
c_n),
\end{eqnarray}
where $(\alpha,\beta,\gamma,\delta)$ are constant parameters and
the dot denotes differentiation in time, was
derived independently in various contents. Smerzi \& Trombettoni
\cite{ST} suggested that this equation models Bose--Einstein
condensates in a lattice, when Wannier functions
associated with a periodic potential are replaced by the nonlinear
bound states. Independently, this equation was derived
heuristically by Oster {\em et al.} \cite{OJE03} to model
waveguide arrays in a nonlinear photonic crystal. Earlier, the
same equation was obtained by Claude {\em et al.} \cite{CKKS93}
for modelling of slowly varying discrete breathers in the
Fermi--Pasta--Ulam lattices using asymptotic multi-scale
expansions. Very recently, the generalized DNLS equation was rederived again
by Abdullaev {\em et al.} \cite{ABDKK} in a more specific context
of the GP (Gross--Pitaevskii) equation with a periodic potential and a
periodic sign-varying nonlinearity coefficient. If the coefficient
in front of the onsite cubic nonlinear term of the DNLS equation
vanishes thanks to the sign-varying nonlinearity (that is, if $\beta = 0$ 
in (\ref{DNLS})), the authors of
\cite{ABDKK} incorporated other intersite cubic nonlinear terms
from a decomposition involving Wannier functions.

In what follows, we focus on the specific applications of the
generalized DNLS equation (\ref{DNLS}) in the context of
Bose--Einstein condensates in a lattice. Therefore, we consider
the GP equation with a periodic linear potential and a
periodic sign-varying nonlinearity coefficient in the form
\begin{equation}\label{INLSEdt}
i \partial_t \Psi = - \partial_x^2 \Psi + V(x)\Psi + G(x) |\Psi|^{2}\Psi,
\end{equation}
where $V(x)$ and $G(x)$ are smooth, $2\pi$-periodic functions on $\mathbb{R}$.
To make all arguments precise, we assume that
\begin{equation}
\label{potentials}
V(-x)=V(x), \quad G(-x) = -G(x), \quad x \in \mathbb{R}.
\end{equation}
In this case, $\beta = 0$ and our main result states that 
the intersite cubic nonlinear terms in the
generalized DNLS equation (\ref{DNLS}) appear beyond the
applicability of the DNLS equation in the tight-binding
approximation and hence must be dropped from the leading order of
the asymptotic equation. Instead of these terms, the onsite
quintic nonlinear term must be taken into account
to balance the linear dispersion term in the
quintic DNLS equation
\begin{eqnarray}
\label{DNLSmodified} i \dot{c}_n = \alpha (c_{n+1} + c_{n-1})
+ \chi |c_n|^4 c_n,
\end{eqnarray}
where $(\alpha,\chi)$ are constant parameters which can be
computed from analysis of the GP equation
(\ref{INLSEdt}) with potentials (\ref{potentials}).

Note that the approach leading to the DNLS equation is general and can be 
applied to other $2\pi$-periodic functions $V(x)$ and $G(x)$. In a general 
case, $\beta \neq 0$ and the onsite cubic nonlinear term is the only nonlinear term, which 
must be accounted in the cubic DNLS equation 
\begin{eqnarray}
i \dot{c}_n = \alpha (c_{n+1} + c_{n-1}) + \beta |c_n|^2 c_n, \label{DNLScubic}
\end{eqnarray}
at the leading order of the asymptotic expansions.

To compare the outcomes of the generalized DNLS equation
(\ref{DNLS}) with those of the quintic DNLS equation
(\ref{DNLSmodified}), we study bifurcations of gap solitons in the
semi-infinite band gap. We show analytically that $\alpha$ and
$\chi$ has equal {\em negative} signs in the semi-infinite band
gap so that the quintic DNLS equation (\ref{DNLSmodified}) always
has a ground state, indicating that bifurcation of a
gap soliton in the semi-infinite gap always occurs in
the GP equation (\ref{INLSEdt}) with potentials (\ref{potentials}). 
Recall that this bifurcation does not
occur if the nonlinearity coefficient is sign-definite and
positive, see Pankov \cite{Pankov}. 

In contrary to the predictions
of the quintic DNLS equation (\ref{DNLSmodified}), we also show
that the corresponding version of the generalized DNLS equation
(\ref{DNLS}) does not admit localized solutions for any values of 
$\alpha$ and $\gamma$ (when $\beta = \delta = 0$) at least in the
slowly varying approximation. A numerical test with particular
potentials
\begin{equation}
\label{explicit-potential} V(x) = V_0 (1 - \cos(x)), \quad G(x) =
G_0 \sin(x),
\end{equation}
indicates that the gap solitons {\em do} exist in the
semi-infinite gap independently of the signs of $V_0$ and $G_0$.
The rigorous proof of existence of localized solutions in the GP
equation (\ref{INLSEdt}) with potentials (\ref{potentials})
in the semi-infinite band gap is beyond the scopes of this work and is
a subject of an ongoing work \cite{Pankov-personal}.

We also inspect another asymptotic reduction of the
GP equation (\ref{INLSEdt}) to the continuous
nonlinear Schr\"{o}dinger (CNLS) equation, see review of
asymptotic reductions of the GP equation with a periodic
potential in Pelinovsky \cite{Pel}. We show that the
corresponding CNLS equation also has a focusing quintic nonlinear
term, which supports the same conclusion on bifurcation of a
gap soliton in the semi-infinite gap.

We note that reductions to the DNLS and CNLS equations were
recently justified with rigorous analysis both in the stationary
and time-dependent cases, see works \cite{Sch1,Sch2} in the
context of the DNLS equation and works \cite{Sch3,Sch4} in the context
of the CNLS equation. Therefore, it is a matter of a routine technique
to formalize arguments of our paper.

We shall add that the literature on the GP equation (\ref{INLSEdt}) is rapidly growing 
in physics literature. The GP equation with a periodic
nonlinearity coefficient was considered by Fibich {\em at al.}
\cite{Fibich}, where no linear potential $V(x)$ was included and the mean value
of $G(x)$ was assumed to be nonzero. For the same equation, Sakaguchi \& Malomed 
\cite{SakMal} derived a quintic CNLS equation in a slowly varying approximation 
of a broad soliton. 

A more general equation with both linear and nonlinear 
periodic coefficients was studied by Bludov {\em et al.} in \cite{Bludov1,Bludov2}, 
where gap solitons were approximated numerically. It was shown in these works that bifurcations 
of small-amplitude gap solitons near the lowest band edge depend on the sign of 
the cubic coefficient in the effective CNLS equation. Using perturbation theory, 
Rapti {\em et al.} \cite{Rapti}
studied existence and stability of gap solitons in the semi-infinite gap for 
the GP equation with small linear and nonlinear periodic coefficients.

The paper is organized as follows. Section 2 justifies the asymptotic reduction
of the Gross--Pitaevskii equation (\ref{INLSEdt}) to the quintic DNLS equation
(\ref{DNLSmodified}). Section 3 gives results on existence of stationary
localized modes in the generalized DNLS equations (\ref{DNLS})
and (\ref{DNLSmodified}) and discusses the relevance of previous works
\cite{ST} and \cite{ABDKK}. Section 4 justifies the asymptotic reduction
to the quintic CNLS equation.

{\bf Acknowledgements:} J. B.-B. has been partially supported by grants  PCI08-0093 
(Consejer\'{\i}a de Educaci\'on y Ciencia de la Junta de Comunidades de Castilla-La Mancha, Spain), 
PRINCET and FIS2006-04190 (Ministerio de Educaci\'on y Ciencia, Spain). J. B.-B. also would like to thank the Mathematics Department at McMaster University for their hospitality during his visit there.

\section{Reductions to the DNLS equation}

To consider the tight-binding approximation and reductions to the
DNLS equation, we assume that
\begin{equation}
\label{potential-eps}
V(x) = \epsilon^{-2} V_0(x),
\end{equation}
where $\epsilon$ is a small parameter and $V_0$ is a
smooth, $2\pi$-periodic, and even function on $\mathbb{R}$. In
what follows and without loss of generality, we set
\begin{equation}
\label{potential-normalization}
V_0(0) = 0  \quad \mbox{\rm and} \quad V''_0(0) = 2,
\end{equation}
so that $V_0(x) = x^2 + {\cal O}(x^4)$ as $x \to 0$. For
particular explicit computations, we consider the standard example
$$
V_0(x) = 2(1 - \cos(x)) = 4 \sin^2 \left(\frac{x}{2}\right). 
$$
The limit $\epsilon \to 0$ is generally
referred to as the semi-classical limit \cite{Helffer}.

Let $\Psi(x;k)$ be the Bloch function of
\begin{equation}
\label{Bloch} L \Psi(x;k) = E(k) \Psi(x;k), \quad L =
-\partial_x^2 + \epsilon^{-2} V_0(x),
\end{equation}
for the lowest energy band function $E(k)$. It is known
(see review in Pelinovsky {\em et al.} \cite{Sch1})
that $E(k)$ and
$\Psi(x,k)$ satisfy
$$
E(k) = E(k+1) = E(-k), \quad k \in \mathbb{R}
$$
and
$$
\Psi(x;k) = e^{-2 \pi k i} \Psi(x+2\pi;k) = \Psi(x;k+1) = \bar{\Psi}(x;-k), \quad x
\in \mathbb{R}, \quad k \in \mathbb{R},
$$
so that one can define the Fourier series decompositions
$$
E(k) = \sum_{n \in \mathbb{Z}} \hat{E}_n e^{2\pi n k i}, \quad
\Psi(x;k) = \sum_{n \in \mathbb{Z}} \hat{\psi}_n(x) e^{2\pi n k i},
$$
with real-valued Fourier coefficients satisfying the reduction
$$
\hat{E}_{n} = \hat{E}_{-n}, \quad \hat{\psi}_n(x) = \hat{\psi}_0(x -
2\pi n), \quad n \in \mathbb{Z}.
$$
Functions $\{ \hat{\psi}_n(x) \}_{n \in \mathbb{Z}}$ are referred to as the Wannier functions.
For the lowest energy band, these functions
form an orthonormal basis in a subspace of $L^2(\mathbb{R})$ associated with 
the lowest energy band, enjoy an
exponential decay to zero as $|x| \to \infty$ and satisfy the
system of differential equations
\begin{equation}
\label{Wannier} \left( L - \hat{E}_0 \right) \hat{\psi}_0(x) = \sum_{n
\geq 1} \hat{E}_{n} \left( \hat{\psi}_n(x) + \hat{\psi}_{-n}(x) \right), \quad x \in \mathbb{R}.
\end{equation}
Thanks to orthogonality and normalization of the
Wannier functions, we infer that $\hat{E}_n$ can be computed from the overlapping integrals
\begin{equation}
\label{overlaping} \hat{E}_n = \langle L \hat{\psi}_0, \hat{\psi}_n
\rangle = \int_{\mathbb{R}} \left[ \hat{\psi}_0'(x) \hat{\psi}_n'(x) +
\epsilon^{-2} V_0(x) \hat{\psi}_0(x) \hat{\psi}_n(x) \right] dx,
\quad n \in \mathbb{N}.
\end{equation}

For the semi-infinite gap and for even potentials, Wannier
functions $\{ \hat{\psi}_n(x) \}_{n \in \mathbb{Z}}$ are strictly
positive and even on $\mathbb{R}$. It is proved with the standard
technique in the semi-classical limit $\epsilon \to 0$ (see review in
Aftalion \& Helffer \cite{AH09})
that the Wannier function $\hat{\psi}_0(x)$ can be approximated
near $x = 0$ by the normalized Gaussian eigenfunction of
$$
\left( -\partial_x^2 + \frac{x^2}{\epsilon^2} \right) \psi_0(x) =
\frac{1}{\epsilon} \psi_0(x), \quad x \in \mathbb{R},
$$
or explicitly,
\begin{equation}
\label{Gaussian}
\psi_0(x) = \frac{1}{(\pi \epsilon)^{1/4}}
e^{-\frac{x^2}{2\epsilon}}, \quad x \in \mathbb{R}.
\end{equation}
This approximation suggests that
\begin{equation}
\label{leading-order} \hat{E}_0 \sim \frac{1}{\epsilon}, \quad
\hat{\psi}_0(x) \sim \psi_0(x), \quad \mbox{\rm near} \quad x = 0,
\end{equation}
where we have used the notation $A(\epsilon) \sim B(\epsilon)$ for
two functions of $\epsilon$ near $\epsilon = 0$ 
to indicate that $A(\epsilon)/B(\epsilon) \to 1$ as $\epsilon \to 0$.
To obtain approximations for the overlapping integrals
(\ref{overlaping}), one need to proceed with the WKB solution
\begin{equation}
\label{WKBsolution} \hat{\psi}_0(x) \sim A(x) e^{-\frac{1}{\epsilon}
\int_0^x S(x') dx'}, \quad x \in (0,2\pi),
\end{equation}
where
\begin{eqnarray*}
S(x) & = & \sqrt{V_0(x)}, \\
A(x) & = & \frac{1}{(\pi \epsilon)^{1/4}} \exp\left[ \int_0^x
\frac{1 - S'(x')}{2 S(x')} dx' \right], \quad x \in (0,2\pi).
\end{eqnarray*}
The WKB solution (\ref{WKBsolution}) is derived by neglecting the term $A''(x)$ in the
left-hand-side of (\ref{Wannier}) and by dropping the right-hand-side of
(\ref{Wannier}) thanks to the hierarchy of overlapping integrals in
\begin{equation}
\label{hierarchy}
\ldots \ll |\hat{E}_2| \ll |\hat{E}_1 | \ll |\hat{E}_0|.
\end{equation}
In addition, to derive the explicit expression for $A(x)$ we
have replaced $\hat{E}_0$ by $1/\epsilon$ and used the matching
condition of $\hat{\psi}_0(x)$ with $\psi_0(x)$ as $x \downarrow 0$. Note that
the expression for $A(x)$ diverges as $x \uparrow 2\pi$.

Thanks to the explicit formulas and the symmetry of $\hat{\psi}_0(x)$
on $\mathbb{R}$, the first overlapping integral is computed as
follows
\begin{eqnarray*}
\hat{E}_1 & = & 2 \int_{-\infty}^{\pi} \hat{\psi}_0(x) \left(
-\partial_x^2 + \epsilon^{-2} V_0(x) - \hat{E}_0 \right)
\hat{\psi}_0(x-2\pi) dx \\
& = & 4 \hat{\psi}_0(\pi) \hat{\psi}_0'(\pi) + 2 \int_{-\infty}^{\pi}
\hat{\psi}_0(x-2\pi) \left( -\partial_x^2 + \epsilon^{-2} V_0(x) -
\hat{E}_0 \right) \hat{\psi}_0(x) dx.
\end{eqnarray*}
Neglecting the integral (thanks again to smallness of the
right-hand-side of (\ref{Wannier}) on $(-\infty,-\pi]$) and
substituting the WKB solution (\ref{WKBsolution}) at $x = \pi$, we
infer that the leading order of the first overlapping integral is given by
\begin{equation}
\label{hat-E-1}
\hat{E}_1 \sim 4 \hat{\psi}_0(\pi) \hat{\psi}_0'(\pi) = 
-\frac{4 \sqrt{V_0(\pi)}}{\pi^{1/2} \epsilon^{3/2}}
\exp\left(-\frac{2}{\epsilon} \int_0^{\pi} \sqrt{V_0(x)} dx +
\int_0^{\pi}\frac{1 - S'(x)}{S(x)} dx \right).
\end{equation}
For instance, if $V_0(x) = 2(1 - \cos(x))$, then
$$
S(x) = 2 \sin\left(\frac{x}{2}\right), \quad
A(x) = \frac{1}{(\pi \epsilon)^{1/4} \cos\left(\frac{x}{4}\right)}, \quad
x \in (0,2\pi),
$$
so that
$$
\hat{E}_1 \sim -\frac{16}{\pi^{1/2} \epsilon^{3/2}}
e^{-\frac{8}{\epsilon}}.
$$
Similarly, one can establish the hierarchy of other overlapping integrals in (\ref{hierarchy}).
See Helffer \cite{Helffer} for rigorous justification of the WKB solutions above.

To deal with nonlinear terms, we compute the integral involving
$G(x) \hat{\psi}^4_0(x)$ as $\epsilon \to 0$. This integral
can be computed with the use of the Gaussian approximation
(\ref{Gaussian})--(\ref{leading-order}), thanks to the fast decay
of $\hat{\psi}_0(x)$ on $\mathbb{R}$ and the smoothness of $G(x)$ on $\mathbb{R}$:
\begin{equation}
\label{psi-0-4}
\int_{\mathbb{R}} G(x) \hat{\psi}_0^4(x) dx \sim \frac{1}{\pi \epsilon}
\int_{\mathbb{R}} G(x) e^{-\frac{2x^2}{\epsilon}} dx \sim \frac{1}{(2 \pi \epsilon)^{1/2}} G(0).
\end{equation}
The overlapping integrals involving homogeneous quartic powers
of $\hat{\psi}_0(x)$, $\hat{\psi}_0(x-2\pi)$, etc. are
much smaller compared to the integral (\ref{psi-0-4}), thanks again to the fast decay 
of $\hat{\psi}_0(x)$ on $\mathbb{R}$.

\subsection{Reduction to the cubic DNLS equation}

Asymptotic reduction to the cubic DNLS equation holds for $G(0) \neq 0$.
Computations (\ref{hat-E-1}) and (\ref{psi-0-4}) suggest the use of the scaling transformation
$$
\Psi(x,t) = \epsilon^{1/4} \mu^{1/2} \left( \Psi_0 + \mu \Psi_1 \right) e^{-i
\hat{E}_0 t},
$$
with a new small parameter
\begin{equation}
\label{mu}
\mu = \frac{1}{\pi^{1/2} \epsilon^{3/2}} e^{-\frac{2}{\epsilon}
\int_0^{\pi} \sqrt{V_0(x)} dx},
\end{equation}
for asymptotic solutions of the Gross--Pitaevskii equation (\ref{INLSEdt}).
To give main details, let $T = \mu t$ be slow time
and decompose
$$
\Psi_0 = \sum_{n \in \mathbb{Z}} c_n(T) \hat{\psi}_n(x),
$$
for some coefficients $\{ c_n \}_{n \in \mathbb{Z}}$ to be
defined. The remainder term $\Psi_1$ satisfies
\begin{eqnarray*}
i \partial_t \Psi_1 & = & (L - \hat{E}_0) \Psi_1 + \sum_{n \in
\mathbb{Z}} \left( -i \dot{c}_n + \mu^{-1} \sum_{m \in \mathbb{N}}
\hat{E}_m \left( c_{n+m} + c_{n-m}\right) \right) \hat{\psi}_n \\
& \phantom{t} & + \epsilon^{1/2} G(x) |\Psi_0 + \mu \Psi_1|^2 (\Psi_0 + \mu \Psi_1).
\end{eqnarray*}
Coefficients $\{ c_n \}_{n \in \mathbb{Z}}$ are uniquely defined by
the orthogonality condition
$$
\langle \hat{\psi}_n, \Psi_1 \rangle = 0  \quad \mbox{\rm for all}
\;\; n \in \mathbb{Z},
$$
which ensures that $\Psi_1$ is in the orthogonal complement
of the subspace of $L^2(\mathbb{R})$ corresponding to the lowest spectral band
of operator $L$. Orthogonal projections to $\{ \hat{\psi}_n \}_{n \in \mathbb{Z}}$
truncated at the leading-order terms as
$\mu \to 0$ take the form of the cubic DNLS equation
\begin{eqnarray}
\label{DNLScub} i \dot{c}_n = \alpha (c_{n+1} + c_{n-1}) + \beta |c_n|^2 c_n,
\end{eqnarray}
where
\begin{eqnarray}
\label{alpha}
\alpha & = & \mu^{-1} \hat{E}_1 \sim -4 \sqrt{V_0(\pi)}
\exp\left(\int_0^{\pi}\frac{1 - S'(x)}{S(x)} dx \right), \\
\label{beta}
\beta & = & \epsilon^{1/2} \int_{\mathbb{R}} G(x) \hat{\psi}_0^4(x) dx
\sim \frac{1}{(2 \pi)^{1/2}} G(0),
\end{eqnarray}
thanks to the fact that other overlapping integrals in the linear and cubic terms are smaller. 
Rigorous justification of the cubic DNLS equation (\ref{DNLScub}) on a finite time interval 
is proved by Pelinovsky \& Schneider \cite{Sch2}, where the main result is formulated in space 
${\cal H}^1(\mathbb{R})$ defined by the norm $\| \Psi \|_{{\cal H}^1(\mathbb{R})} := 
\sqrt{\langle (L + I) \Psi, \Psi \rangle}$, where $L = -\partial_x^2 + V(x)$. 

\begin{theorem}
Assume that $V(x)$ is given by (\ref{potential-eps}), $G(0) \neq 0$, and $\mu$ is given by (\ref{mu}).  
Let $\{ c_n(T) \}_{n \in \mathbb{Z}} \in C^1(\mathbb{R},l^1(\mathbb{Z}))$
be a global solution of the cubic DNLS equation (\ref{DNLScub}) 
with initial data $\{ c_n(0) \}_{n \in \mathbb{Z}} \in l^2_p(\mathbb{Z})$
for any $p > \frac{1}{2}$.
Let $\Psi_0 \in {\cal H}^1(\mathbb{R})$ satisfy the bound
$$
\left\| \Psi_0 - \epsilon^{1/4} \mu^{1/2}
\sum_{n \in \mathbb{Z}} c_n(0) \hat{\psi}_n \right\|_{{\cal
H}^1(\mathbb{R})} \leq C_0 \epsilon^{1/4} \mu^{3/2},
$$
for some $C_0 > 0$. There exists $\mu_0 > 0$, $T_0 > 0$, and $C > 0$,
such that for any $\mu \in (0,\mu_0)$, the GP equation (\ref{INLSEdt}) 
with initial data $\Psi(0) = \Psi_0$ has a solution 
$\Psi(t) \in C([0,T_0/\mu],{\cal H}^1(\mathbb{R}))$ satisfying
the bound
$$
\forall t \in \left[ 0, T_0/\mu \right] : \quad
\left\| \Psi(\cdot,t) - \epsilon^{1/4} \mu^{1/2} \left( \sum_{n \in \mathbb{Z}}
c_n(\mu t) \hat{\psi}_n \right) e^{-i \hat{E}_0 t} \right\|_{{\cal H}^1(\mathbb{R})} \leq C
\epsilon^{1/4} \mu^{3/2}.
$$
\label{theorem-1}
\end{theorem}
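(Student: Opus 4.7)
The plan is to follow the justification scheme for tight-binding limits developed by Pelinovsky \& Schneider \cite{Sch2}, which the excerpt already cites. First I would substitute the ansatz $\Psi(x,t) = \epsilon^{1/4}\mu^{1/2}\bigl(\Psi_0(x,T) + \mu\Psi_1(x,t)\bigr)e^{-i\hat{E}_0 t}$ with slow time $T = \mu t$ into the Gross--Pitaevskii equation (\ref{INLSEdt}), split the function space into the lowest-band Wannier subspace $X_0 = \overline{\mathrm{span}}\{\hat{\psi}_n\}_{n \in \mathbb{Z}}$ and its orthogonal complement $X_1 = X_0^\perp$, and enforce $\Psi_0 \in X_0$ and $\Psi_1 \in X_1$ through the orthogonality conditions $\langle \hat{\psi}_n, \Psi_1\rangle = 0$ for all $n$. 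Projecting onto $X_0$ produces the cubic DNLS equation (\ref{DNLScub}) for $\{c_n(T)\}$ with coefficients (\ref{alpha})--(\ref{beta}), modulo a forcing of order $\mu$, while projecting onto $X_1$ produces an inhomogeneous evolution equation
\[
i\partial_t \Psi_1 = (L - \hat{E}_0)\Psi_1 + \mathcal{R}(\Psi_0,\Psi_1;\mu,\epsilon),
\]
whose residual $\mathcal{R}$ bundles together three types of error: (i) sub-leading band-coupling terms $\mu^{-1}\hat{E}_m(c_{n+m}+c_{n-m})\hat{\psi}_n$ for $m \geq 2$, which are $O(\mu)$ relative to the $m = 1$ contribution by the hierarchy (\ref{hierarchy}); (ii) off-diagonal quartic overlap integrals $\int G(x)\hat{\psi}_{n_1}\hat{\psi}_{n_2}\hat{\psi}_{n_3}\hat{\psi}_{n_4}dx$ with indices not all equal, which are exponentially subdominant to the diagonal integral (\ref{psi-0-4}); and (iii) the genuine nonlinear self-interactions of $\Psi_1$ with $\Psi_0$ through the cubic term.

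Second, I would derive an energy estimate for $\Psi_1$ in the $\mathcal{H}^1(\mathbb{R})$ norm. Because $\Psi_1 \in X_1$, the operator $L - \hat{E}_0$ is bounded below on $X_1$ by the spectral gap between the first and second Bloch bands, so the $\mathcal{H}^1$ norm is equivalent to $\sqrt{\langle (L+I)\Psi_1,\Psi_1\rangle}$ on this subspace and the linear semigroup $e^{-i(L-\hat{E}_0)t}$ is unitary. The cubic nonlinearity is tame in one space dimension thanks to $\mathcal{H}^1 \hookrightarrow L^\infty$, and the embedding $l^2_p(\mathbb{Z}) \hookrightarrow l^1(\mathbb{Z})$ for $p > \frac{1}{2}$ combined with global well-posedness of (\ref{DNLScub}) in $l^2(\mathbb{Z})$ gives $\|\Psi_0(\cdot,T)\|_{\mathcal{H}^1} \leq C(T_0)$ uniformly on $T \in [0,T_0]$. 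Differentiating $\|\Psi_1\|_{\mathcal{H}^1}^2$ in $t$ and using these ingredients produces a schematic bound of the form
\[
\frac{d}{dt}\|\Psi_1\|_{\mathcal{H}^1}^2 \leq \mu\bigl(C_1 + C_2\|\Psi_1\|_{\mathcal{H}^1}^2 + C_3\mu\|\Psi_1\|_{\mathcal{H}^1}^4\bigr).
\]

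Third, a standard bootstrap-plus-Gronwall argument closes the estimate on the long window $[0,T_0/\mu]$: one assumes a priori that $\|\Psi_1\|_{\mathcal{H}^1} \leq 2M$ on a subinterval, feeds this into the quadratic differential inequality, integrates to get $\|\Psi_1\|_{\mathcal{H}^1} \leq M e^{CT_0}$ on the same subinterval, and then chooses $M$ large and $\mu_0$ small enough to absorb the higher-order $\mu$-terms and extend the bound to the full window by continuity. Multiplying the resulting $\|\Psi_1\|_{\mathcal{H}^1} \leq C$ by the ansatz prefactor $\epsilon^{1/4}\mu^{3/2}$ yields exactly the stated approximation error. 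The main obstacle is the long time scale coupled with the exponential-in-$\epsilon^{-1}$ smallness of $\mu$: one must verify that \emph{every} term in $\mathcal{R}$ is genuinely $O(\mu)$ after normalization, and this is precisely where the hierarchy (\ref{hierarchy}), the WKB asymptotics (\ref{WKBsolution}), and the Gaussian tail estimate (\ref{psi-0-4}) are essential, since without them sub-leading band couplings or off-diagonal overlaps would enter $\mathcal{R}$ at order $1$ and the Gronwall estimate would fail to reach the full window $[0,T_0/\mu]$.
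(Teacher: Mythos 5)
The paper does not actually prove Theorem \ref{theorem-1}: it derives the cubic DNLS equation formally from the ansatz and the orthogonality conditions, and then defers the rigorous error bound entirely to Pelinovsky \& Schneider \cite{Sch2}. Your proposal follows exactly that framework (same ansatz, same splitting into the lowest-band Wannier subspace and its complement, same classification of the linear residuals via the hierarchy (\ref{hierarchy}) and of the off-diagonal quartic overlaps via the Gaussian tails), so at the level of strategy you are reconstructing the cited proof rather than inventing a new one. That part is fine.

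However, there is a genuine gap at the heart of your energy estimate. The forcing of the $\Psi_1$-equation contains the $X_1$-projection of the \emph{diagonal} cubic term $\epsilon^{1/2} G(x) \sum_n |c_n|^2 c_n \hat{\psi}_n^3$. This term falls into none of your three residual classes: it is not a higher band-coupling term, its overlap indices are all equal so the Gaussian tail estimate does not make it small, and it does not involve $\Psi_1$. Only its projection onto $\hat{\psi}_n$ (the coefficient $\beta$) is absorbed into the DNLS equation; its component in $X_1$ has $L^2$-norm of order one, because $\epsilon^{1/2} G \hat{\psi}_0^3$ has $O(1)$ norm and is a Gaussian of width $\sqrt{\epsilon/3}$, which is far from parallel to $\hat{\psi}_0$. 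Consequently your schematic inequality $\frac{d}{dt}\|\Psi_1\|_{{\cal H}^1}^2 \leq \mu( C_1 + \cdots)$ is false as written: a direct energy estimate against an $O(1)$ forcing gives only $\|\Psi_1\| \lesssim t$, i.e.\ $O(1/\mu)$ at the end of the window $[0,T_0/\mu]$, which destroys the claimed $\epsilon^{1/4}\mu^{3/2}$ bound. The missing idea is a normal-form (near-identity) transformation: since $(L-\hat{E}_0)$ restricted to $X_1$ is invertible with a spectral gap of order $\epsilon^{-1}$, you must first subtract the quasi-stationary response $(L-\hat{E}_0)^{-1} Q_1\bigl[\epsilon^{1/2} G |\Psi_0|^2\Psi_0\bigr]$ from $\Psi_1$; the new remainder is then forced only by the time derivative of this response, which is $O(\mu)$ because $c_n$ depends on $t$ only through $T=\mu t$, together with your terms (i)--(iii). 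Only after this step does the bootstrap-plus-Gronwall argument reach the full time scale $T_0/\mu$. This non-resonance step is the central technical device of \cite{Sch2} and cannot be omitted.
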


\subsection{Reduction to the quintic DNLS equation}

If $G(0) = 0$, and we assume that $G(x)$ is odd on $\mathbb{R}$, then 
$\beta = 0$ and the DNLS equation
(\ref{DNLScub}) becomes a linear equation. A modified asymptotic solution is needed to incorporate
the leading order of the asymptotic expansion. We will show that the modified scaling
$$
\Psi(x,t) = \epsilon^{-1/4} \mu^{1/4} \left( \Psi_0 + \mu^{1/2} \Psi_1 + \mu \Psi_2 \right) e^{-i
\hat{E}_0 t},
$$
will reduce the GP equation (\ref{INLSEdt}) to the quintic DNLS
equation (\ref{DNLSmodified}) if $G'(0) \neq 0$. Again, let $T =
\mu t$ be the slow time and define $\Psi_0$ and $\Psi_1$ by
$$
\Psi_0 = \sum_{n \in \mathbb{Z}} c_n(T) \hat{\psi}_n(x), \quad
\Psi_1 = \sum_{n \in \mathbb{Z}} |c_n(T)|^2 c_n(T) \hat{\varphi}_n(x),
$$
where $\hat{\varphi}_n(x) = \hat{\varphi}_0(x-2\pi n)$, $n \in \mathbb{Z}$ is a solution of
\begin{equation}
\label{inhomogeneous-problem}
( L - \hat{E}_0 ) \hat{\varphi}_0(x) = - \epsilon^{-1/2} G(x) \hat{\psi}_0^3(x), \quad x \in \mathbb{R},
\end{equation}
under the orthogonality condition
\begin{equation}
\label{nonlinearity-0}
\int_{\mathbb{R}} G(x) \hat{\psi}_0^4(x) dx = 0.
\end{equation}
The remainder term $\Psi_2$ satisfies
\begin{eqnarray*}
i \partial_t \Psi_2 & = & (L - \hat{E}_0) \Psi_2 + \sum_{n \in
\mathbb{Z}} \left( -i \dot{c}_n + \mu^{-1} \sum_{m \in \mathbb{N}}
\hat{E}_m \left( c_{n+m} + c_{n-m}\right) \right) \hat{\psi}_n \\
& \phantom{t} & -
i \mu^{1/2} \sum_{n \in \mathbb{Z}} \frac{d}{dT} \left( |c_n|^2 c_n
\right) \hat{\varphi}_n + \epsilon^{-1/2} \mu^{-1/2} G(x) \\
& \phantom{t} & \times \left(
|\Psi_0 + \mu^{1/2} \Psi_1 + \mu \Psi_2|^2 (\Psi_0 + \mu^{1/2} \Psi_1 + \mu \Psi_2)
-\sum_{n \in \mathbb{Z}} |c_n|^2 c_n \hat{\psi}_n^3 \right).
\end{eqnarray*}
Orthogonal projections to $\{ \hat{\psi}_n \}_{n \in \mathbb{Z}}$ truncated
at the leading-order terms as $\mu \to 0$ result in the quintic DNLS equation (\ref{DNLSmodified}) with
the same expression for $\alpha$ as in (\ref{alpha}) and the following expression
for $\chi$:
\begin{equation}
\label{chi}
\chi = 3 \epsilon^{-1/2} \int_{\mathbb{R}} G(x) \hat{\psi}_0^3(x) \hat{\varphi}_0(x) dx.
\end{equation}
The justification of the quintic DNLS equation (\ref{DNLSmodified}) relies on the two facts.

\begin{lemma}
Under condition (\ref{potentials}), there exists a solution $\hat{\varphi}_0(x)$
of the inhomogeneous equation (\ref{inhomogeneous-problem}) so that
$\chi$ is bounded and nonzero as $\epsilon \to 0$.
\label{lemma-1}
\end{lemma}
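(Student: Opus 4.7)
The plan is a semi-classical construction of $\hat{\varphi}_0$ centered at the well $x=0$, together with a variational identity for $\chi$. First, I would verify the compatibility condition (\ref{nonlinearity-0}): because $V_0$ is even, the Wannier function $\hat{\psi}_0$ is even, and combined with $G(-x)=-G(x)$ this makes $G\hat{\psi}_0^4$ odd on $\mathbb{R}$, so its integral vanishes. This is exactly the solvability condition needed for $G\hat{\psi}_0^3$ to be perpendicular to the leading-order near-kernel of $L-\hat{E}_0$, which in the semi-classical limit is the Gaussian $\psi_0$ of (\ref{Gaussian}).

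Next, I would introduce the semi-classical scaling $y=x/\sqrt{\epsilon}$, write $\hat{\varphi}_0(x)=\epsilon^{1/4}\tilde{\varphi}(y)$, and use the approximations already established in the paper, namely $\hat{\psi}_0(x)\sim\pi^{-1/4}\epsilon^{-1/4}e^{-y^2/2}$, $G(x)\sim G'(0)\sqrt{\epsilon}\,y$, and $L-\hat{E}_0\sim\epsilon^{-1}(-\partial_y^2+y^2-1)$. At leading order the equation (\ref{inhomogeneous-problem}) reduces to the shifted quantum-harmonic-oscillator problem
\begin{equation*}
(-\partial_y^2+y^2-1)\tilde{\varphi}(y) \;=\; -G'(0)\,\pi^{-3/4}\,y\,e^{-3y^2/2}.
\end{equation*}
Since the source is odd while the kernel $\phi_0(y)=\pi^{-1/4}e^{-y^2/2}$ is even, the Fredholm alternative yields a unique odd solution $\tilde{\varphi}\in L^2(\mathbb{R})$ in $\phi_0^\perp$ with Gaussian decay at infinity. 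I would then glue this inner solution to a WKB outer solution of the form (\ref{WKBsolution}) in the classically forbidden region, producing a global $\hat{\varphi}_0\in L^2(\mathbb{R})$ whose overlaps with neighboring wells are exponentially small in $\epsilon^{-1}$, in accordance with the hierarchy (\ref{hierarchy}).

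Finally, I would evaluate (\ref{chi}). Tracking powers of $\epsilon$ from the factors $\epsilon^{-1/2}$, $\sqrt{\epsilon}$ of $G$, $\epsilon^{-3/4}$ of $\hat{\psi}_0^3$, $\epsilon^{1/4}$ of $\hat{\varphi}_0$, and $\sqrt{\epsilon}$ of $dx$, these combine to $\epsilon^{0}$, giving
\begin{equation*}
\chi \;\sim\; 3\,G'(0)\,\pi^{-3/4}\int_{\mathbb{R}} y\,e^{-3y^2/2}\,\tilde{\varphi}(y)\,dy \quad\text{as}\;\;\epsilon\to 0,
\end{equation*}
so $\chi$ is $O(1)$. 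Multiplying the reduced equation by $\tilde{\varphi}$ and integrating converts this into the variational identity
\begin{equation*}
\chi \;\sim\; -3\int_{\mathbb{R}} \tilde{\varphi}(y)\,(-\partial_y^2+y^2-1)\tilde{\varphi}(y)\,dy.
\end{equation*}
Because $\tilde{\varphi}$ lies in $\phi_0^\perp$, on which $-\partial_y^2+y^2-1$ is bounded below by $2$, the integral is strictly positive whenever $\tilde{\varphi}\not\equiv 0$, which happens precisely when $G'(0)\neq 0$; hence $\chi$ converges to a strictly negative finite limit. The main technical obstacle is producing a genuine $L^2(\mathbb{R})$ solution despite $\hat{E}_0$ sitting inside the lowest Bloch band, so that $L-\hat{E}_0$ is not invertible on $L^2(\mathbb{R})$. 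This is handled by observing that the projection of $G\hat{\psi}_0^3$ onto each lowest-band Wannier function is either zero (for $\hat{\psi}_0$, by (\ref{nonlinearity-0})) or exponentially small in $\epsilon^{-1}$ (for $\hat{\psi}_n$, $n\neq 0$, by the same Wannier-overlap estimate that produced (\ref{hat-E-1})), so the dangerous lowest-band contamination of $\hat{\varphi}_0$ is of order $\mu$ and affects neither the boundedness nor the sign of $\chi$ at leading order.
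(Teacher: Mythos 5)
Your proposal is correct and follows essentially the same two-step strategy as the paper: the sign of $\chi$ comes from the quadratic-form identity $\chi = -3\langle (L-\hat{E}_0)\hat{\varphi}_0,\hat{\varphi}_0\rangle$ (the paper's (\ref{chi-negative})), and boundedness comes from the semi-classical harmonic approximation of (\ref{inhomogeneous-problem}) at the well $x=0$. Two execution details differ, both to your advantage. First, you solve the rescaled oscillator problem $(-\partial_y^2+y^2-1)\tilde{\varphi}=-G'(0)\pi^{-3/4}y e^{-3y^2/2}$ abstractly by the Fredholm alternative, whereas the paper posits the closed form (\ref{Gaussian-varphi}); a direct check shows $(-\partial_y^2+y^2-1)\bigl(y e^{-3y^2/2}\bigr)=8y(1-y^2)e^{-3y^2/2}$, so the paper's Gaussian ansatz does not satisfy the reduced equation term-by-term and the explicit constant in (\ref{chi-expression}) should be viewed with caution — your Hermite-expansion solution is the correct leading-order object, at the price of losing the explicit value of the limit. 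Second, you address the fact that $\hat{E}_0$ sits inside the lowest Bloch band, so $L-\hat{E}_0$ is neither invertible nor genuinely positive definite on $L^2(\mathbb{R})$; your observation that the lowest-band contamination of the source is either zero (by (\ref{nonlinearity-0})) or $O(\mu)$, combined with the spectral gap of $-\partial_y^2+y^2-1$ on $\phi_0^\perp$, turns the paper's informal "positive definite in the limit" assertion into a clean argument for a strictly negative finite limit of $\chi$ when $G'(0)\neq 0$.
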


\begin{proof}
First we note that if $G(x)$ is odd and $\hat{\psi}_0(x)$ is even on $\mathbb{R}$, then
$\hat{\varphi}_0(x)$ is odd on $\mathbb{R}$, so that the integral in (\ref{chi})
is generally non-zero. Moreover, using
the inhomogeneous equation (\ref{inhomogeneous-problem}), we infer that
\begin{equation}
\label{chi-negative}
\chi = -3 \langle (L - \hat{E}_0) \hat{\varphi}_0,\hat{\varphi}_0 \rangle,
\end{equation}
so that $\chi < 0$ for the lowest band of $L$, since $(L - \hat{E}_0)$ is positive
definite if $\hat{E}_0$ is at the bottom of the spectrum of $L$ in the limit $\epsilon \to 0$.
To show that $\chi$ is bounded as $\epsilon \to 0$, we can use again the Gaussian
approximation (\ref{leading-order}) and find solutions of the inhomogeneous equation
(\ref{inhomogeneous-problem}) near $x = 0$ in the form
\begin{equation}
\label{Gaussian-varphi}
\hat{\varphi}_0(x) \sim - \frac{\epsilon^{1/2} G'(0)}{8 (\pi \epsilon)^{3/4}} x e^{-\frac{3 x^2}{2 \epsilon}},
\quad \mbox{\rm near} \quad x = 0.
\end{equation}
As a result,
\begin{equation}
\label{chi-expression}
\chi \sim - \frac{(G'(0))^2}{16 \sqrt{3} \pi},
\end{equation}
and we see that $\chi$ is bounded and negative as $\epsilon \to 0$.
\end{proof}

\begin{lemma}
Under condition (\ref{potentials}),
the largest overlapping integrals from the cubic term $|\Psi_0|^2 \Psi_0$,
\begin{equation}
\label{cubic-overlapping-integral}
\int_{\mathbb{R}} G(x) \hat{\psi}_0^3(x) \hat{\psi}_0(x-2\pi) dx, \quad
\int_{\mathbb{R}} G(x) \hat{\psi}_0^2(x) \hat{\psi}_0^2(x-2\pi) dx,
\end{equation}
are smaller than $\epsilon^{1/2} \mu$.
\label{lemma-2}
\end{lemma}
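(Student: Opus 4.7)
The plan is to extract a uniform pointwise bound on the displaced product $\hat{\psi}_0(x)\,\hat{\psi}_0(x-2\pi)$ from the WKB representation (\ref{WKBsolution}), matched with the Gaussian ground state (\ref{Gaussian}) in neighborhoods of size $O(\sqrt{\epsilon})$ around the wells at $x = 0$ and $x = 2\pi$. The central observation is the symmetry identity
$$
\int_0^x S(x')\,dx' + \int_0^{2\pi - x} S(x')\,dx' = 2 \int_0^{\pi} S(x')\,dx', \qquad x \in (0, 2\pi),
$$
which follows from $V_0(2\pi - x) = V_0(x)$ and hence $S(2\pi - x) = S(x)$. Inserted into the WKB form, this identity shows that the exponential factor of the product $\hat{\psi}_0(x)\,\hat{\psi}_0(x-2\pi)$ is $x$-independent in the bulk and equal to $e^{-\frac{2}{\epsilon}\int_0^{\pi} S}$. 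Combined with the WKB amplitudes (which are polynomial in $\epsilon$ away from the turning points and are tamed near the wells by the Gaussian matching), this yields a uniform bound
$$
\sup_{x \in \mathbb{R}} \hat{\psi}_0(x)\,|\hat{\psi}_0(x - 2\pi)| \leq C\,\epsilon\,\mu,
$$
for some $C$ independent of $\epsilon$, where $\mu$ is as in (\ref{mu}).

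Armed with this pointwise bound, I would estimate the first overlapping integral by the algebraic rearrangement
$$
\int_{\mathbb{R}} G(x)\,\hat{\psi}_0^3(x)\,\hat{\psi}_0(x-2\pi)\,dx = \int_{\mathbb{R}} G(x)\,\hat{\psi}_0^2(x) \cdot \bigl[\, \hat{\psi}_0(x)\,\hat{\psi}_0(x-2\pi) \,\bigr]\,dx,
$$
whose absolute value is controlled by $\|G\|_{\infty}\,\|\hat{\psi}_0\|_{L^2}^2\,\sup_x |\hat{\psi}_0(x)\,\hat{\psi}_0(x-2\pi)| = O(\epsilon\,\mu)$, using $\|\hat{\psi}_0\|_{L^2} = 1$. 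The second integral is handled the same way after the factorization $\hat{\psi}_0^2(x)\,\hat{\psi}_0^2(x-2\pi) = [\hat{\psi}_0(x)\,\hat{\psi}_0(x-2\pi)]\cdot[\hat{\psi}_0(x)\,\hat{\psi}_0(x-2\pi)]$, pulling one factor out in $L^{\infty}$ and estimating the other in $L^1$ by Cauchy--Schwarz and $\|\hat{\psi}_0\|_{L^2} = 1$. Both integrals are therefore $O(\epsilon\,\mu)$, which is $o(\epsilon^{1/2}\,\mu)$ as $\epsilon \to 0$, which is what the lemma demands.

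The main obstacle is the breakdown of the WKB amplitude $A(x)$ at the turning points $x = 0$ and $x = 2\pi$, where the formula in (\ref{WKBsolution}) diverges because $S(x) \to 0$. I would resolve this in the standard way, by replacing the WKB form with its Airy-type matching to the Gaussian eigenfunction (\ref{Gaussian}) in $O(\sqrt{\epsilon})$ neighborhoods of each well bottom, as developed in Helffer \cite{Helffer}; the composite envelope is then uniformly bounded on $\mathbb{R}$ and retains the exponential rate $e^{-\int_0^x S / \epsilon}$ needed in the argument above. Note that the stronger oddness of $G$ (which would give additional cancellation near $x = 0$ in the Gaussian-width integration of $G(x)\hat{\psi}_0^3(x)$) is not needed for the bound stated in the lemma; it would only serve to refine the constant in the estimate of the first integral.
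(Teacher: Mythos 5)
Your key pointwise claim, $\sup_{x}\hat{\psi}_0(x)\,|\hat{\psi}_0(x-2\pi)| \leq C\,\epsilon\,\mu$, is not correct near the well centers, and this breaks the whole $L^\infty$--$L^1$ argument. Your symmetry identity for the exponent is right: the exponential factor of the product is the constant $e^{-\frac{2}{\epsilon}\int_0^\pi S}= \pi^{1/2}\epsilon^{3/2}\mu$ throughout $(0,2\pi)$. But the amplitude factor $A(x)A(2\pi-x)$ is \emph{not} uniformly $O(\epsilon^{-1/2})$: since $\frac{1-S'(x)}{2S(x)} \sim \frac{1}{2\pi - x}$ as $x\uparrow 2\pi$, the amplitude $A$ diverges like $(2\pi-x)^{-1}$ there (for the cosine example, $A(x)A(2\pi-x) = 2(\pi\epsilon)^{-1/2}/\sin(x/2)$). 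So in the bulk the product is $\sim \epsilon\mu/\sin(x/2)$, which is $O(\epsilon\mu)$ only for $x$ bounded away from $0$ and $2\pi$; the turning-point (harmonic-oscillator) matching cuts the divergence off at $|x|\sim\sqrt{\epsilon}$, where the product has already risen to $\Theta(\epsilon^{1/2}\mu)$. One can check this directly: near $x=0$ one has $\hat{\psi}_0(x)\sim(\pi\epsilon)^{-1/4}$ while $\hat{\psi}_0(x-2\pi)$, being the tail of the Wannier function at the adjacent well, matches onto the \emph{second} (non-Gaussian) solution of the local harmonic problem and is of size $\sim \epsilon^{-3/4}e^{-\frac{2}{\epsilon}\int_0^\pi S}\cdot\min(|x|/\sqrt{\epsilon},1)$, so the product peaks at $\Theta(\epsilon^{1/2}\mu)$, not $O(\epsilon\mu)$. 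With the correct supremum your estimate only gives $O(\epsilon^{1/2}\mu)$ for both integrals, which is the threshold itself, not ``smaller than $\epsilon^{1/2}\mu$.''

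Your closing remark that the oddness of $G$ is not needed is therefore exactly backwards: $G(0)=0$ is what rescues the estimate, because it supplies an extra factor $|G(x)|\lesssim |G'(0)|\,|x|\lesssim \sqrt{\epsilon}$ precisely on the $O(\sqrt{\epsilon})$-neighborhoods of $x=0$ and $x=2\pi$ where the product is largest; without it the first integral is genuinely $\Theta(\epsilon^{1/2}\mu)$ and the lemma is false. For comparison, the paper does not attempt a pointwise bound at all: it kills the second integral \emph{exactly} by the antisymmetry of $G$ about $x=\pi$, and for the first integral it substitutes $G\hat{\psi}_0^3 = -\epsilon^{1/2}(L-\hat{E}_0)\hat{\varphi}_0$ and integrates by parts on $(-\infty,\pi]$, reducing everything to boundary terms at $x=\pi$ (of size $e^{-\frac{8}{3\epsilon}\int_0^\pi S}$, exponentially below $\epsilon^{1/2}\mu\sim \epsilon^{-1}e^{-\frac{2}{\epsilon}\int_0^\pi S}$) plus a remainder controlled by $\hat{E}_1$ times an exponentially small overlap. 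If you want to salvage your more elementary route, you must (i) replace the uniform bound by the $x$-dependent bound $\hat{\psi}_0(x)|\hat{\psi}_0(x-2\pi)|\lesssim \epsilon\mu/\max(\sin(x/2),\sqrt{\epsilon})$ on $(0,2\pi)$, and (ii) use $G(0)=G(2\pi)=0$ in the near-well regions; this yields $O(\epsilon\mu)$, which suffices for the lemma, though it is still far weaker than the exponential improvement the paper obtains.
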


\begin{proof}
First, we note that if $G(x)$ is a smooth, $2 \pi$-periodic,
and odd function, then $G(x)$ is also odd with respect
to the point $x = \pi$, so that
\begin{equation}
\label{nonlinearity-1}
\int_{\mathbb{R}} G(x) \hat{\psi}_0^2(x) \hat{\psi}_0^2(x-2\pi) dx = 0.
\end{equation}
To consider the other nonzero integral in (\ref{cubic-overlapping-integral}), we write
\begin{equation}
\label{nonlinearity-2}
\int_{\mathbb{R}} G(x) \hat{\psi}_0^3(x) \hat{\psi}_0(x-2\pi) dx =
\left( \int_{-\infty}^{\pi} + \int_{\pi}^{\infty} \right) G(x) \hat{\psi}_0^3(x) \hat{\psi}_0(x-2\pi) dx.
\end{equation}
The second integral on $[\pi,\infty)$ is much smaller than the first integral on $(-\infty,\pi]$
thanks to the faster decay of $\hat{\psi}^3_0(x)$ compared to $\hat{\psi}_0(x)$ on $\mathbb{R}$.
As a result, we deal only with the first integral, which we rewrite as follows:
\begin{eqnarray}
\label{technical-integral}
& \phantom{t} & \int_{-\infty}^{\pi} G(x) \hat{\psi}_0^3(x) \hat{\psi}_0(x-2\pi) dx \\ \nonumber
& = & - \epsilon^{1/2} \int_{-\infty}^{\pi} \hat{\psi}_0(x-2\pi) \left(
-\partial_x^2 + \epsilon^{-2} V_0(x) - \hat{E}_0 \right)
\hat{\varphi}_0(x) dx \\ \nonumber
& = & \epsilon^{1/2} \left[ \hat{\psi}_0(\pi) \hat{\varphi}_0'(\pi) +
\hat{\psi}_0'(\pi) \hat{\varphi}_0(\pi) \right] -
\epsilon^{1/2} \int_{-\infty}^{\pi} \hat{\varphi}_0(x) \left(
-\partial_x^2 + \epsilon^{-2} V_0(x) - \hat{E}_0 \right) \hat{\psi}_0(x-2\pi) dx,
\end{eqnarray}
where we recall again that $\hat{\psi}_0(x)$ is even on $\mathbb{R}$. In view
of equation (\ref{Wannier}), we have
\begin{eqnarray*}
& \phantom{t} & \int_{-\infty}^{\pi} \hat{\varphi}_0(x) \left(
-\partial_x^2 + \epsilon^{-2} V_0(x) - \hat{E}_0 \right) \hat{\psi}_0(x-2\pi) dx \\
& = &
\sum_{n \geq 1} \hat{E}_n \int_{-\infty}^{\pi} \hat{\varphi}_0(x) \left(
\hat{\psi}_{n+1}(x) + \hat{\psi}_{-n+1}(x) \right) dx \\
& \sim & \hat{E}_1 \int_{-\infty}^{\pi} \hat{\varphi}_0(x) \hat{\psi}_0(x) dx =
-\hat{E}_1 \int_{\pi}^{\infty} \hat{\varphi}_0(x) \hat{\psi}_0(x) dx,
\end{eqnarray*}
where the last equality is due to the fact that $\hat{\psi}_0(x)$ is even and $\hat{\varphi}_0(x)$ is odd on $\mathbb{R}$.

Thanks to the fast decay of $\hat{\psi}_0(x)$ and $\hat{\varphi}_0(x)$ on $\mathbb{R}$, 
the second term in (\ref{technical-integral}) becomes smaller than
$\epsilon^{1/2} \hat{E}_1 = \epsilon^{1/2} \mu \alpha$, where $\alpha$ is given by (\ref{alpha}).

Boundary values of $\hat{\psi}_0(x)$,
$\hat{\varphi}_0(x)$ and their derivatives at $x = \pi$ in the first term in 
(\ref{technical-integral})
can again be computed from the WKB solutions for $\hat{\psi}_0(x)$ and
$\hat{\varphi}_0(x)$.

For solutions of the inhomogeneous
equation (\ref{inhomogeneous-problem}), we substitute
$$
\hat{\varphi}_0(x) \sim B(x) e^{-\frac{3}{\epsilon} \int_0^x S(x') dx'}, \quad x \in (0,2\pi),
$$
where $S(x) = \sqrt{V_0(x)}$ and $B(x)$ satisfies the first-order differential equation
$$
\epsilon (6 S(x) B'(x) + 3 S'(x) B(x) - B(x)) - 8 S^2(x)B(x) = - \epsilon^{3/2} G(x) A^3(x),
$$
where the term $B''(x)$ is neglected and $\hat{E}_0 \sim 1/\epsilon$ is used.
Solving the differential equation with the integration
factor, we obtain
$$
B(x) = \frac{C(x)}{S^{1/3}(x)} \exp\left(\frac{4}{3\epsilon} \int_0^x S(x') dx' + \frac{1}{6}
\int_0^x \frac{1 - S'(x')}{S(x')} dx' \right), \quad x \in (0,2\pi),
$$
with
$$
C(x) = -\frac{\epsilon^{1/2}}{6} \int_0^x \frac{G(x') A^3(x')}{S^{1/3}(x')}
 \exp\left(- \frac{4}{3\epsilon} \int_0^{x'} S(x'') dx'' - \frac{1}{6}
\int_0^{x'} \frac{1 - S'(x'')}{S(x'')} dx'' \right) dx'.
$$
Using the Laplace method for computing integrals, we obtain a correct behavior
of $\hat{\varphi}_0(x)$ near $x = 0$ that matches the previous calculation (\ref{Gaussian-varphi}):
\begin{eqnarray*}
\hat{\varphi}_0(x) & \sim & - \frac{\epsilon^{1/2} G'(0)}{6 (\pi \epsilon)^{3/4} x^{1/3}}
e^{-\frac{3 x^2}{2 \epsilon}} \int_0^x y^{1/3} e^{-\frac{2}{3 \epsilon}(y^2 - x^2)} dy \\
& \sim & - \frac{\epsilon^{1/2} G'(0)}{8 (\pi \epsilon)^{3/4}} x e^{-\frac{3 x^2}{2 \epsilon}},
\quad \mbox{\rm near} \quad x = 0.
\end{eqnarray*}
As a result, we have
\begin{eqnarray*}
& \phantom{t} &  \epsilon^{1/2} \left[ \hat{\psi}_0(\pi) \hat{\varphi}_0'(\pi) +
\hat{\psi}_0'(\pi) \hat{\varphi}_0(\pi) \right] \\
& \sim &  \frac{4}{9} S^{2/3}(\pi) A(\pi) \exp\left(-\frac{8}{3\epsilon} \int_0^{\pi} S(x) dx
+ \frac{1}{6} \int_0^{\pi} \frac{1 - S'(x)}{S(x)} dx \right) C_0,
\end{eqnarray*}
where
$$
C_0 = \int_0^{\pi} \frac{G(x) A^3(x)}{S^{1/3}(x)}
 \exp\left(- \frac{4}{3\epsilon} \int_0^{x} S(x') dx' - \frac{1}{6}
\int_0^{x} \frac{1 - S'(x')}{S(x')} dx' \right) dx.
$$
For instance, if $V(x) = 2(1 - \cos(x))$, we obtain
\begin{eqnarray*}
& \phantom{t} & \epsilon^{1/2} \left[ \hat{\psi}_0(\pi) \hat{\varphi}_0'(\pi) +
\hat{\psi}_0'(\pi) \hat{\varphi}_0(\pi) \right] \sim
\frac{4}{9 \pi \epsilon} e^{-\frac{48}{3 \epsilon}} \int_0^{\pi} \frac{G(x)}{\sin^{2/3}(\frac{x}{4})
\cos^{10/3}(\frac{x}{4})} e^{-\frac{16}{3 \epsilon} \cos(\frac{x}{2})} dx,
\end{eqnarray*}
which is clearly smaller than
$$
\epsilon^{1/2} \mu = \frac{1}{\pi^{1/4} \epsilon} \exp\left(-\frac{2}{\epsilon} \int_0^{\pi} S(x) dx \right) =
\frac{1}{\pi^{1/4} \epsilon} e^{-\frac{8}{\epsilon}}.
$$
This completes the proof of Lemma \ref{lemma-2}. 
\end{proof}

Using the approach from Pelinovsky \& Schneider in \cite{Sch2}, we can justify 
the quintic DNLS equation (\ref{DNLSmodified}) on a finite time interval, 
according to the following statement.

\begin{theorem}
Assume that $V(x)$ and $G(x)$ are given by (\ref{potentials}) and 
(\ref{potential-eps}), $G'(0) \neq 0$, and $\mu$ is given by (\ref{mu}).
Let $\{ c_n(T) \}_{n \in \mathbb{Z}} \in C^1(\mathbb{R},l^1(\mathbb{Z}))$
be a global solution of the quintic DNLS equation (\ref{DNLSmodified})
with initial data $\{ c_n(0) \}_{n \in \mathbb{Z}} \in l^2_p(\mathbb{Z})$
for any $p > \frac{1}{2}$. Let $\Psi_0 \in {\cal H}^1(\mathbb{R})$ satisfy the bound
$$
\left\| \Psi_0 - \epsilon^{-1/4} \mu^{1/4}
\sum_{n \in \mathbb{Z}} c_n(0) \hat{\psi}_n \right\|_{{\cal
H}^1(\mathbb{R})} \leq C_0 \epsilon^{-1/4} \mu^{3/4},
$$
for some $C_0 > 0$. There exists $\mu_0 > 0$, $T_0 > 0$, and $C > 0$,
such that for any $\mu \in (0,\mu_0)$, the GP equation (\ref{INLSEdt})
with initial data $\Psi(0) = \Psi_0$ has a solution
$\Psi(t) \in C([0,T_0/\mu],{\cal H}^1(\mathbb{R}))$ satisfying
the bound
$$ 
\forall t \in \left[ 0, T_0/\mu \right] : \quad
\left\| \Psi(\cdot,t) - \epsilon^{-1/4} \mu^{1/4} \left( \sum_{n \in \mathbb{Z}}
c_n(\mu t) \hat{\psi}_n \right) e^{-i \hat{E}_0 t} \right\|_{{\cal H}^1(\mathbb{R})} \leq C
\epsilon^{-1/4} \mu^{3/4}.
$$
\label{theorem-2}
\end{theorem}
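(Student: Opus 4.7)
The plan is to adapt the justification argument of Pelinovsky \& Schneider \cite{Sch2} for the cubic DNLS equation to the modified scaling introduced in Section 2.2. The overall structure is to construct a two-term approximation $\Psi_{\mathrm{app}}$ from the quintic DNLS solution $\{c_n(\mu t)\}$, verify that $\Psi_{\mathrm{app}}$ solves the GP equation (\ref{INLSEdt}) up to a small residual, and close an energy estimate for the error on the long time interval $[0,T_0/\mu]$.

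First I would set
\[
\Psi_{\mathrm{app}}(x,t) = \epsilon^{-1/4}\mu^{1/4}\bigl(\Psi_0(x,\mu t) + \mu^{1/2}\Psi_1(x,\mu t)\bigr)e^{-i\hat{E}_0 t},
\]
with $\Psi_0$ and $\Psi_1$ as in Section 2.2, substitute into (\ref{INLSEdt}), and expand. By construction $\Psi_1$ solves the inhomogeneous equation (\ref{inhomogeneous-problem}), which absorbs the non-resonant part of the cubic nonlinearity $G(x)|\Psi_0|^2\Psi_0$. The resonant part, once projected onto the Wannier basis $\{\hat{\psi}_n\}$, reproduces the onsite quintic coefficient $\chi$ of (\ref{DNLSmodified}) via Lemma \ref{lemma-1}; the intersite cubic overlap integrals in (\ref{cubic-overlapping-integral}) are dismissed by Lemma \ref{lemma-2}; and the intersite linear terms with $m\geq 2$ are controlled by the hierarchy (\ref{hierarchy}). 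Combined with the WKB bounds for $\Psi_1$ derived in the proof of Lemma \ref{lemma-2}, this yields a residual of size $\mathcal{O}(\epsilon^{-1/4}\mu^{7/4})$ in $\mathcal{H}^1(\mathbb{R})$.

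Next I would write $\Psi(x,t) = \Psi_{\mathrm{app}}(x,t) + \epsilon^{-1/4}\mu^{3/4}R(x,t)\,e^{-i\hat{E}_0 t}$ and derive the evolution equation for $R$. The operator $L-\hat{E}_0$ is self-adjoint on $\mathcal{H}^1(\mathbb{R})$, so the linear flow is unitary in that norm; the nonlinear terms in $R$ are controlled by the Sobolev embedding $\mathcal{H}^1(\mathbb{R})\hookrightarrow L^\infty(\mathbb{R})$. After passing to the slow time $T=\mu t$ and balancing scales, an energy estimate produces a differential inequality of the form $\tfrac{d}{dT}\|R\|_{\mathcal{H}^1}\leq C(1+\|R\|_{\mathcal{H}^1})^5$, and Gronwall's lemma gives $\|R(T)\|_{\mathcal{H}^1}\leq C$ for $T\in[0,T_0]$, i.e., $t\in[0,T_0/\mu]$, which is equivalent to the bound stated in the theorem. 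The assumption $\{c_n(0)\}\in l^2_p(\mathbb{Z})$ with $p>\tfrac{1}{2}$ is used in order to have $\{c_n(0)\}\in l^1(\mathbb{Z})$ and hence a uniform $\mathcal{H}^1$-bound on $\sum_n c_n(0)\hat{\psi}_n$.

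The main obstacle is the careful bookkeeping of powers of $\epsilon$ and $\mu$ in the linearization of the nonlinearity around $\Psi_{\mathrm{app}}$. Since the Wannier functions concentrate on scale $\epsilon^{1/2}$, one has $\|\hat{\psi}_n\|_{L^\infty}\sim\epsilon^{-1/4}$, so the multiplicative potential $G|\Psi_{\mathrm{app}}|^4$ appearing in the equation for $R$ is not small pointwise; the smallness required to close Gronwall must therefore emerge after projecting onto the Wannier basis, where the relevant quadratic forms become small thanks to the overlap estimates derived in Lemma \ref{lemma-2}. This is the mechanism formalized in \cite{Sch2} for the cubic case, and transposing it to the quintic case amounts to redoing the expansion with one extra order in $\mu^{1/2}$; after that, the rest of the argument is a standard contraction/energy estimate in $\mathcal{H}^1$.
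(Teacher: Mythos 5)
Your proposal follows essentially the same route as the paper: the authors likewise build the approximation from the modified scaling with $\Psi_0$ and $\Psi_1=\sum_n|c_n|^2c_n\hat{\varphi}_n$, invoke Lemmas \ref{lemma-1} and \ref{lemma-2} to identify the quintic coefficient and discard the intersite cubic overlaps, and defer the residual and Gronwall-type error estimates to the machinery of Pelinovsky \& Schneider \cite{Sch2}. Your bookkeeping of the powers of $\epsilon$ and $\mu$ (residual of size $\epsilon^{-1/4}\mu^{7/4}$ accumulating to $\epsilon^{-1/4}\mu^{3/4}$ over times of order $1/\mu$) is consistent with the stated bound.
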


\begin{remark}
Note that the results of Theorems \ref{theorem-1} and \ref{theorem-2} 
also hold for the piecewise-constant Kronig-Pennig
potential $V_0(x)$ after minor modifications required because of a different
algebraic factor of $\epsilon$ in the definition of $\mu$ \cite{Sch1}. 
\end{remark}

\section{Localized solutions of reduced equations}

We shall consider the stationary solutions of the quintic DNLS
equation (\ref{DNLSmodified}), where the coefficients $\alpha$ and $\chi$ are computed
asymptotically for the lowest energy band
of $L = -\partial_x^2 + \epsilon^{-2} V_0(x)$ in the
semi-classical limit $\epsilon \to 0$.

Let $c_n(T) = \phi_n e^{-i \Omega T}$ for a real parameter $\Omega$ and
a real-valued sequence $\{
\phi_n \}_{n \in \mathbb{Z}}$ and obtain the stationary quintic
DNLS equation
\begin{equation}
\label{statDNLS}
\alpha (\phi_{n+1} + \phi_{n-1}) + \chi \phi_n^5 = \Omega \phi_n, \quad n \in \mathbb{Z}.
\end{equation}
The hierarchy of overlapping integrals (\ref{overlaping}) implies
that the energy band function $E(k)$ is given at the leading order by
$$
E(k) \sim \hat{E}_0 + 2 \hat{E}_1 \cos(2 \pi k) + ...
$$
Since $k = 0$ is the minimal point of $E(k)$ for the lowest energy band,
we have $\hat{E}_1 < 0$ so that $\alpha < 0$. See also (\ref{alpha}),
where $\alpha < 0$ is computed in the limit $\epsilon \to 0$. On the other hand,
representation (\ref{chi-negative}) implies that $\chi < 0$ for
the lowest band. See also (\ref{chi-expression}), where $\chi < 0$ is computed for
$G'(0) \neq 0$ as $\epsilon \to 0$. The semi-infinite gap corresponds to
the interval $\Omega < 2
\alpha$. 

Localized solutions of the stationary quintic DNLS equation (\ref{statDNLS}) 
can be obtained from a
minimization of the energy functional
$$
H = \sum_{n \in \mathbb{Z}} \left(
\alpha \phi_n \phi_{n+1} + \frac{\chi}{6} \phi_n^6 \right),
$$
subject to a fixed $N = \sum_{n \in \mathbb{Z}} \phi_n^2$.
According to Theorem 2.1 of Weinstein
\cite{Weinstein}, there exists a minimizer of $H$ (called a
ground state) for ${\rm sign}(\alpha) = {\rm sign}(\chi)$ with
${\rm sign}(\Omega - 2 \alpha) = {\rm sign}(\alpha)$. Monotonic
exponential decay of the sequence $\{ \phi_n \}_{n \in
\mathbb{Z}}$ to zero as $n \to \pm \infty$ was shown in Theorem
1.1 of Pankov \cite{Pankov-discrete} (where the cubic DNLS
equation was considered without loss of generality). Note that the
localized solution also exists if ${\rm sign}(\alpha) = -{\rm
sign}(\chi)$ for ${\rm sign}(\Omega + 2\alpha) = -{\rm
sign}(\alpha)$ thanks to the staggering transformation
$$
\phi_n \to (-1)^n \phi_n, \quad \chi \to -\chi, \quad \Omega \to
-\Omega, \quad \alpha \to \alpha,
$$
that leaves solutions of (\ref{statDNLS}) invariant.
Therefore, the localized solution is not monotonically
decaying if ${\rm sign}(\alpha) = -{\rm sign}(\chi)$. For the
semi-infinite gap, we have shown above that $\alpha$ and $\chi$
have equal {\em negative} sign, so that a localized solution
of the stationary quintic DNLS equation (\ref{statDNLS}) 
exists in the semi-infinite gap for $\Omega < 2 \alpha$. 

Consider the stationary GP equation
\begin{equation}\label{statGP}
-\Phi''(x) + V(x) \Phi(x) + G(x) \Phi^3(x) = \omega \Phi(x), \quad x
\in \mathbb{R},
\end{equation}
which is derived from the GP equation (\ref{INLSEdt}) from $\Psi(x,t) = \Phi(x) e^{-i \omega t}$.
Persistence analysis of gap solitons in 
Pelinovsky {\em et al.} \cite{Sch1} gives the following result. 

\begin{theorem}
\label{theorem-3} Let $V(x)$ and $G(x)$ satisfy (\ref{potentials}) and 
(\ref{potential-eps}) and $G'(0) \neq 0$, and $\mu$ is given by (\ref{mu}).
Let $\{ \phi_n \}_{n \in \mathbb{Z}} \in l^1(\mathbb{Z})$
be a ground state of the stationary quintic DNLS equation (\ref{statDNLS}) 
for $\Omega < 2 \alpha$. 
There exists $\mu_0 > 0$ and $C > 0$,
such that for any $\mu \in (0,\mu_0)$, the stationary GP equation (\ref{statGP})
with $\omega = \hat{E}_0 + \mu \Omega$ has a solution
$\Phi \in {\cal H}^1(\mathbb{R})$ satisfying
the bound
$$
\left\| \Phi - \epsilon^{-1/4} \mu^{1/4} \left( \sum_{n \in \mathbb{Z}}
\phi_n \hat{\psi}_n \right) \right\|_{{\cal H}^1(\mathbb{R})} \leq C
\epsilon^{-1/4} \mu^{3/4}.
$$
Moreover, $\phi(x)$ decays to zero exponentially fast as $|x| \to \infty$. 
\end{theorem}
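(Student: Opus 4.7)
The plan is to construct $\Phi$ as a near-identity perturbation of the stationary analogue of the ansatz used in Theorem \ref{theorem-2}, and to close the scheme by a Lyapunov--Schmidt reduction combined with the implicit function theorem in $\ell^2_p(\mathbb{Z})$, along the lines of the persistence argument of \cite{Sch1}. Concretely, I would seek
$$
\Phi(x) = \epsilon^{-1/4}\mu^{1/4}\Bigl(\sum_{n\in\mathbb{Z}}\phi_n\hat{\psi}_n(x) + \mu^{1/2}\sum_{n\in\mathbb{Z}}\phi_n^3\hat{\varphi}_n(x) + R(x)\Bigr),
$$
where the $\mu^{1/2}$-term is the slaved correction provided by the inhomogeneous problem (\ref{inhomogeneous-problem}); without it the onsite cubic contribution $G(x)\hat\psi_0^3$ would obstruct the reduction. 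Decomposing $R=R_\parallel+R_\perp$, with $R_\parallel=\sum_n r_n\hat\psi_n$ in the lowest-band subspace and $R_\perp$ in its $L^2$-orthogonal complement, then splits the equation into two coupled problems.

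First I would eliminate $R_\perp$. On the orthogonal complement of the lowest band, $L-\hat{E}_0-\mu\Omega$ is boundedly invertible uniformly as $\mu\to 0$, because $\omega=\hat{E}_0+\mu\Omega$ sits strictly inside the semi-infinite gap (the condition $\Omega<2\alpha$ is exactly equivalent to $\omega<\inf\sigma(L)$ at leading order). Together with the overlap estimates from Lemmas \ref{lemma-1}--\ref{lemma-2} and the exponential decay of $\hat\psi_0$ and $\hat\varphi_0$, this reduces the $R_\perp$-equation to a contraction with $\|R_\perp\|_{{\cal H}^1}=O(\epsilon^{-1/4}\mu^{3/4})$, depending smoothly on $r$.

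The projected equation onto $\{\hat\psi_n\}$, rescaled by $\mu$, then takes the form
$$
\alpha(r_{n+1}+r_{n-1}) + 5\chi\phi_n^4 r_n - \Omega r_n = \mu^{1/2}F_n(\phi,r,\mu),
$$
whose left-hand side is the linearization ${\cal L}(\phi)$ of the stationary quintic DNLS (\ref{statDNLS}) at the ground state $\{\phi_n\}$ and whose right-hand side is locally Lipschitz in $r\in\ell^2_p(\mathbb{Z})$ for $p>\frac12$. The main obstacle will be to establish Fredholm invertibility of ${\cal L}(\phi)$ on real-valued $\ell^2_p(\mathbb{Z})$. Since $\{\phi_n\}$ decays exponentially, the multiplication operator $5\chi\phi_n^4$ is compact on $\ell^2(\mathbb{Z})$ and the essential spectrum of ${\cal L}(\phi)$ equals $[2\alpha-\Omega,-2\alpha-\Omega]$, which lies strictly in $(0,\infty)$ when $\Omega<2\alpha<0$; hence ${\cal L}(\phi)$ is Fredholm of index zero. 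Its only symmetry-forced kernel element is the $U(1)$ generator $\{i\phi_n\}$, which is absent as soon as we restrict to real-valued sequences, and real-valued non-degeneracy of the ground state is standard in the spirit of \cite{Weinstein,Pankov-discrete}. The implicit function theorem then delivers a unique $r=O(\mu^{1/2})$ in $\ell^2_p(\mathbb{Z})$.

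Reassembling $\Phi$ from its pieces yields the stated ${\cal H}^1$-error bound, since $R_\parallel$ inherits the $O(\mu^{1/2})$ bound on $\{r_n\}$, the slaved correction $\mu^{1/2}\sum_n\phi_n^3\hat\varphi_n$ is itself $O(\epsilon^{-1/4}\mu^{3/4})$ in ${\cal H}^1(\mathbb{R})$, and $R_\perp$ was already controlled at the same order. Exponential decay of $\Phi$ as $|x|\to\infty$ follows from three inputs: exponential decay of the discrete ground state $\{\phi_n\}$ (Theorem 1.1 of \cite{Pankov-discrete}), exponential decay of the Wannier functions $\hat\psi_n$ and $\hat\varphi_n$ on $\mathbb{R}$ in the semi-classical limit, and an Agmon-type weighted estimate on $R$ that uses the gap condition $\omega<\inf\sigma(L)$ to propagate exponential weights through the linearized operator $L-\omega+3G(x)\Phi^2$.
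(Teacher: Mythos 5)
The paper does not actually prove Theorem \ref{theorem-3} in the text: it is quoted as the output of the persistence analysis of Pelinovsky, Schneider and MacKay \cite{Sch1}, and your Lyapunov--Schmidt scheme --- Wannier decomposition over the lowest band, the slaved correction $\mu^{1/2}\sum_n\phi_n^3\hat\varphi_n$ solving (\ref{inhomogeneous-problem}), uniform invertibility of $L-\omega$ on the orthogonal complement of the band subspace because $\Omega<2\alpha$ is equivalent at leading order to $\omega<\inf\sigma(L)$, and an implicit-function-theorem step for the projected equation --- is exactly the architecture of that reference, so in approach you are aligned with what the paper invokes. (Minor bookkeeping: since $R$ already sits inside the prefactor $\epsilon^{-1/4}\mu^{1/4}$ in your ansatz, the contraction should give $\|R_\perp\|=O(\mu^{1/2})$, not $O(\epsilon^{-1/4}\mu^{3/4})$; the final error bound is unaffected.)

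The one genuine gap is the invertibility of ${\cal L}(\phi)$ on real sequences. Your Fredholm reasoning is fine: multiplication by $5\chi\phi_n^4$ is compact on $\ell^2(\mathbb{Z})$ because $\phi_n\to0$, and the essential spectrum $[2\alpha-\Omega,\,-2\alpha-\Omega]$ lies in $(0,\infty)$ when $\Omega<2\alpha<0$. But the assertion that ``real-valued non-degeneracy of the ground state is standard'' is not justified. Weinstein's theorem \cite{Weinstein} produces $\{\phi_n\}$ as a constrained minimizer, which only yields ${\cal L}(\phi)\geq 0$ on $\{\phi\}^{\perp}$ and does not exclude a nontrivial kernel there; unlike the continuum NLS, no ODE/shooting or Sturm argument is available on the lattice, and in \cite{Sch1} the triviality of $\ker{\cal L}(\phi)$ in the real-valued setting is an explicit \emph{hypothesis} of the persistence theorem, verified only in special regimes such as the anti-continuum limit $\alpha\to0$. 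So you must either add non-degeneracy as a hypothesis (as the cited reference does, and as the theorem implicitly does by deferring to it) or supply a separate argument for the quintic DNLS ground state. Finally, the statement's last sentence should read $\Phi(x)$ rather than $\phi(x)$; your Agmon-type weighted estimate, combined with the exponential decay of $\{\phi_n\}$ from \cite{Pankov-discrete} and of the Wannier functions, is the right way to get that decay.
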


\begin{remark}
One can also prove existence of gap solitons in the semi-infinite gap 
of the GP equation (\ref{INLSEdt}) with potentials (\ref{potentials}) 
in the opposite limit of large-amplitude gap solitons using the Lyapunov--Schmidt reduction method. See 
Sivan {\em et al.} \cite{Sivan} for an example of this technique for the GP equation 
with a periodic linear potential $V(x)$ and a constant nonlinearity coefficient.
\end{remark}

To summarize, from Theorem \ref{theorem-3}, we predict
existence of gap solitons in the semi-infinite gap for any even $V(x)$ and 
odd $G(x)$ with $G'(0) \neq 0$. To illustrate the existence
numerically, we solve the GP equation by using the so-called
imaginary time method
\cite{Perez-Garcia}. As approximations of localized solutions evolve along the imaginary time,
iterations converge to the ground state of the stationary GP equation (\ref{statGP}).

We have developed a Fourier pseudospectral scheme for the
discretization of the spatial derivatives combined with a
split-step scheme for iterations in the imaginary time, see implementation
of this method by Montesinos \& P\'erez-Garc\'ia \cite{Victor}. In other words, solutions of
$$
\partial_{t} U(x,t) = (A+B) U,
$$
with
$$
A = -\partial_{xx}, \quad B= V(x)+G(x)|U|^2,
$$
are approximated from exact solutions of the problems
$\partial_{t} U= A U$ and $\partial_{t} U = B U$. By using the
symmetric (second-order) split-step method,
whose equation is
\begin{equation}\label{solnumerica}
U(x,t+\tau) = e^{\tau A/2}e^{\tau B}e^{\tau A/2}
U(x,t)+\mathcal{O}(\tau^{3}),
\end{equation}
we calculate a localized solution of the stationary GP equation (\ref{statGP})
as $t \to \infty$. Figure
\ref{figura} shows the branch of gap solitons bifurcating to the
semi-infinite gap (left) and a particular profile of the localized
solution (right) that corresponds to the point on the solution branch on the
left.

We note that the numerical scheme we have used has many
advantages. First, it is more accurate than finite-difference numerical methods.
Second, the Fourier transform can be
computed by using the fast Fourier transform. Finally, the
$L^{2}$-norm of the localized solutions is preserved during the time
iterations so that the $L^2$-norm of a gap soliton along the solution branch can be
fixed by the starting approximation.

 \begin{figure}
\epsfig{file=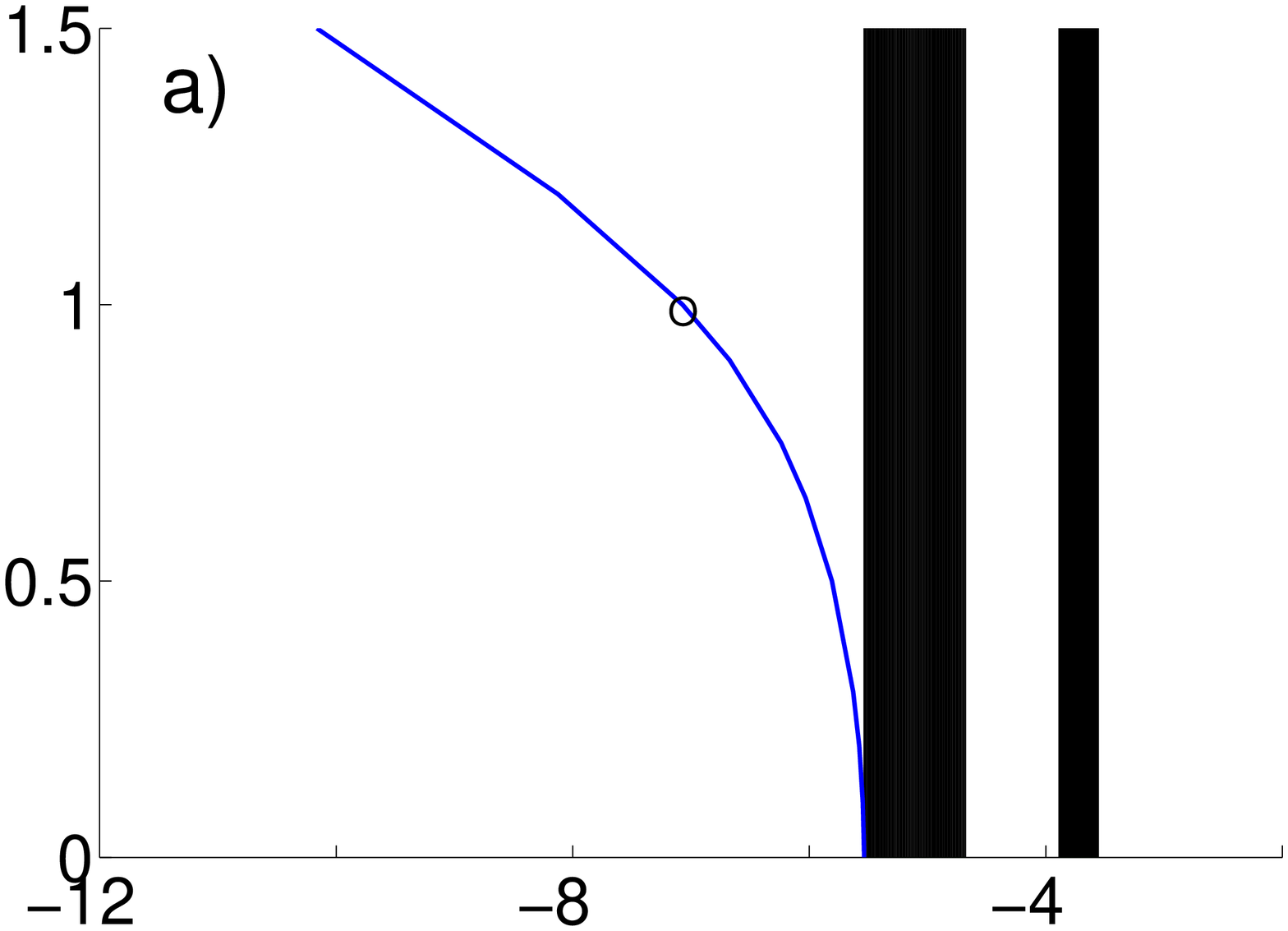,width=7cm}
\epsfig{file=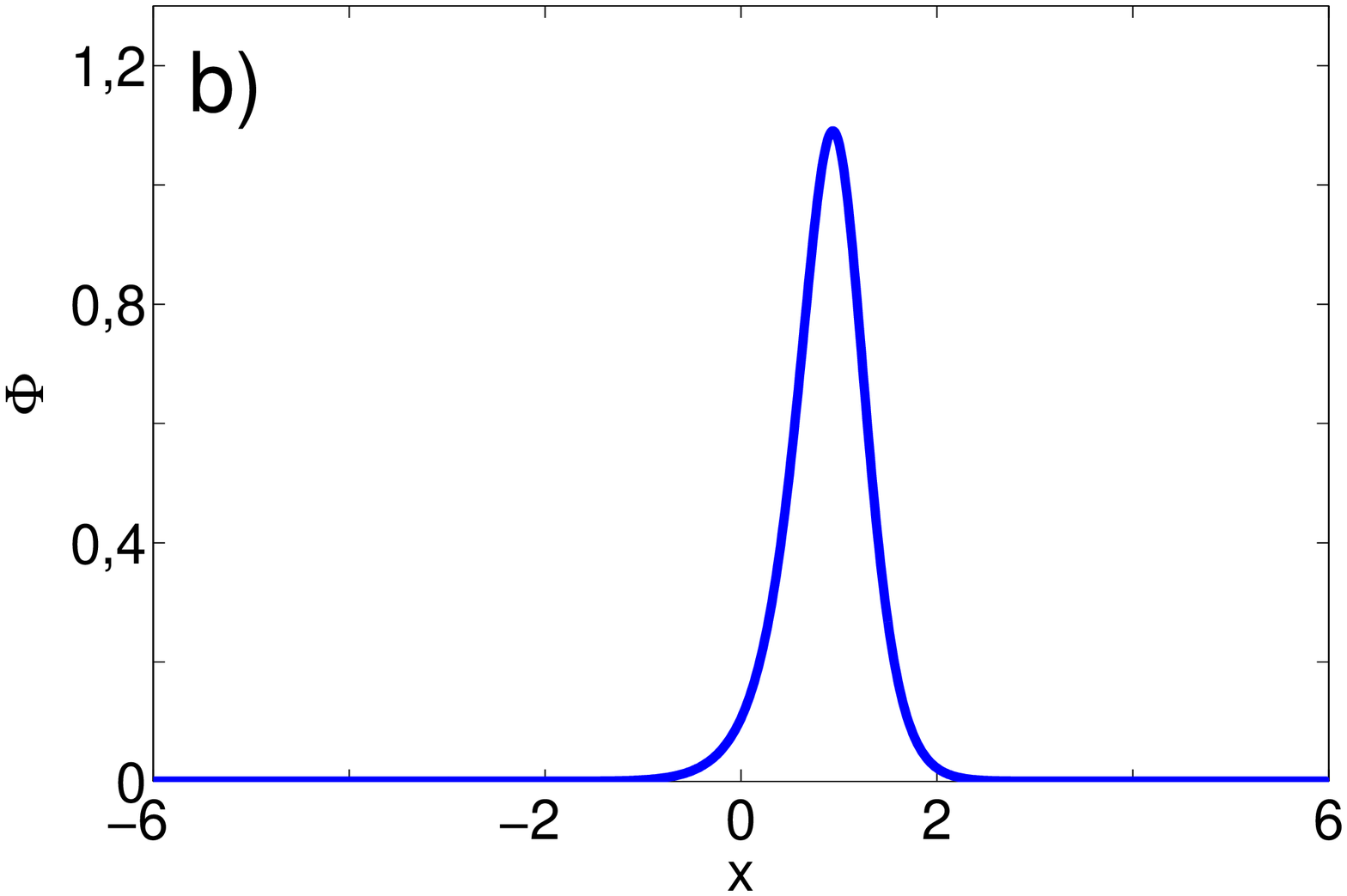,width=7cm} \caption{The solution
family of gap solitons for $G_{0}=-10$ and $V_{0}=6$ in
(\ref{explicit-potential}): The $L^{2}$-norm $N$ versus $\omega$
(left) and the spatial profile of gap soliton corresponding to
marked point with a black circle (right).} \label{figura}
\end{figure}

In the end, we note that existence of localized solutions in the
stationary GP equation (\ref{statGP}) for any
smooth $2\pi$-periodic even $V(x)$ and odd $G(x)$ in the
semi-infinite gap of $L$ can be proved using the variational theory by
a modification of arguments in \cite{Pankov}. This modification
is a subject of an ongoing work \cite{Pankov-personal}. Numerical evidences of 
existence of gap solitons in the semi-infinite gap for sign-varying nonlinearity 
coefficients can be found in \cite{Bludov1,Bludov2}.

\subsection{Comparison with the generalized DNLS equations}

Let us compare our main conclusion with the prediction of the
stationary generalized DNLS equation considered by Abdullaev
{\em et al.} \cite{ABDKK}. For the case of odd
nonlinearity coefficient $G(x)$, this stationary equation is
written in the form
\begin{eqnarray}
\label{statDNLSgen} \alpha (\phi_{n+1} + \phi_{n-1}) + \gamma (3
\phi_n^2 (\phi_{n+1} - \phi_{n-1}) - \phi_{n+1}^3 + \phi^3_{n-1})
= \Omega \phi_n, \;\; n \in \mathbb{Z},
\end{eqnarray}
where $\alpha$ is the same as in (\ref{statDNLS}) and $\gamma$ is
proportional to the overlapping integral (\ref{nonlinearity-2}).
Note that the cubic term in (\ref{statDNLSgen}) is slightly
different from the one in (\ref{DNLS}), which holds for even
nonlinearity coefficient $G(x)$ \cite{ABDKK}. We also note that
$\beta = \delta = 0$ thanks to (\ref{nonlinearity-0}) and
(\ref{nonlinearity-1}). The energy functional is now written as
follows:
$$
H = \sum_{n \in \mathbb{Z}} \left( \alpha \phi_n \phi_{n+1} +
\gamma \phi_n^3 (\phi_{n+1} - \phi_{n-1}) \right).
$$
While we are not able to prove that the stationary DNLS equation
(\ref{statDNLSgen}) admits no localized solutions for any signs of
$\alpha$ and $\gamma$, we can simplify the problem in the slowly
varying approximation, which is also referred to as the continuum
limit of the lattice equation. To this end, we assume that the
following expansion makes sense
$$
\phi_{n\pm 1} = \phi(x_n) \pm h \phi'(x_n) + \frac{1}{2} h^2
\phi''(x_n) + {\cal O}(h^3),
$$
where $x_n = hn$, $n \in \mathbb{Z}$ and apply the scaling
$$
\alpha = 2 h \hat{\alpha}, \quad \Omega - 2 \alpha = 2 h^3
\hat{\Omega}.
$$
At the leading order, the difference equation (\ref{statDNLSgen})
becomes the second-order differential equation
\begin{equation}
\label{second-order-ODE} \hat{\alpha} \phi''(x) - \gamma \phi'(x)
\left[ (\phi'(x))^2 + 3 \phi(x) \phi''(x) \right] = \hat{\Omega}
\phi(x), \quad x \in \mathbb{R},
\end{equation}
which has the first integral
$$
I = \frac{1}{2} \hat{\alpha} (\phi'(x))^2 - \gamma \phi(x)
(\phi'(x))^3 - \frac{1}{2} \hat{\Omega} \phi^2(x).
$$
We note that $I = 0$ for localized solutions and that no turning
point $x_0 \in \mathbb{R}$ with $\phi(x_0) > 0$ and $\phi'(x_0) =
0$ exists. As a result, the trajectory departing from the critical
point $(\phi,\phi') = (0,0)$ in the first quadrant of $(\phi,\phi')$
remains in the first quadrant and goes to infinity. As a result,
no classical localized solutions of the differential equation (\ref{second-order-ODE})
exist. Thus, we have the following result.

\begin{proposition}
Stationary generalized DNLS equation (\ref{statDNLSgen}) for any coefficients 
$\alpha$, $\gamma$, and $\Omega$ admits no localized solutions in the slowly varying 
approximation. 
\end{proposition}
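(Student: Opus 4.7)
The plan is to combine the first integral $I$ already exhibited above with an elementary phase-plane argument applied to the reduced ODE (\ref{second-order-ODE}). Any classical localized solution $\phi(x)$ satisfies $\phi(x),\phi'(x)\to 0$ as $|x|\to\infty$, so $I\equiv 0$ on the whole trajectory. Since (\ref{second-order-ODE}) is invariant under $\phi\mapsto -\phi$ (every term is odd in $\phi$), it suffices to rule out non-trivial orbits whose profile becomes positive at some point.

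The central step is a turning-point obstruction. A non-trivial localized profile $\phi$ that takes a positive value $\phi(x_*)>0$ must achieve a positive maximum at some $x_0$, with $\phi(x_0)=\phi_0>0$ and $\phi'(x_0)=0$, because $\phi(x_0)>0=\lim_{|x|\to\infty}\phi(x)$. Evaluating the conservation law at $x_0$ gives $I(x_0)=-\tfrac{1}{2}\hat{\Omega}\phi_0^{2}$, which can vanish only if $\hat{\Omega}=0$. Combined with the $\phi\mapsto-\phi$ symmetry, this eliminates every localized profile whenever $\hat{\Omega}\ne 0$.

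The step I expect to be the main obstacle is the degenerate resonance $\hat{\Omega}=0$, where the origin of the phase plane ceases to be hyperbolic and the turning-point test becomes vacuous. In this case $I=0$ factors as $(\phi')^{2}\bigl(\tfrac{1}{2}\hat{\alpha}-\gamma\phi\phi'\bigr)=0$, so on any open interval where $\phi'$ does not vanish a non-trivial trajectory satisfies $\phi\phi'=\hat{\alpha}/(2\gamma)$, and therefore $\phi^{2}(x)=(\hat{\alpha}/\gamma)x+C$ is affine with nonzero slope. On an unbounded connected component this forces $|\phi(x)|\to\infty$, contradicting localization; at a finite boundary of such a component one would need simultaneously $\phi'=0$ and $\phi\phi'=\hat{\alpha}/(2\gamma)\ne 0$, which is impossible. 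A short local-uniqueness argument applied to the resolved form $\phi''=\gamma(\phi')^{3}/(\hat{\alpha}-3\gamma\phi\phi')$ at a critical point $(\phi_\ast,0)$ then rules out the remaining possibility that $\phi$ is constant on an interval while being non-constant elsewhere, and the argument is complete.
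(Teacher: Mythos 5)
Your proof is correct and follows essentially the same route as the paper: the vanishing of the first integral $I$ along a localized orbit is incompatible with the turning point $\phi'(x_0)=0$, $\phi(x_0)\neq 0$ that any nontrivial localized profile must attain, because $I(x_0)=-\tfrac{1}{2}\hat{\Omega}\phi^2(x_0)$. Your explicit treatment of the degenerate case $\hat{\Omega}=0$ (and of sign-indefinite profiles via the odd symmetry of the equation) goes beyond what the paper writes down, which tacitly assumes $\hat{\Omega}\neq 0$ and works only in the first quadrant of the phase plane.
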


We conclude that the stationary generalized DNLS equation (\ref{DNLS}) gives
the {\em opposite} ({\em wrong}) conclusion to the bifurcation problem of
localized solutions in the semi-infinite gap, compared to the 
stationary quintic DNLS equation  (\ref{statDNLS}).

It is even more problematic how to interpret the modification of the generalized
DNLS equation (\ref{DNLS}) by Smerzi \& Trombettoni \cite{ST}, where the onsite cubic
nonlinear term $\beta |c_n|^2 c_n$ was replaced by $\beta |c_n|^{2
p} c_n$ with $p \leq 2$. If $\beta \neq 0$, the justification of
the cubic DNLS equation (\ref{DNLScub}) in Theorem \ref{theorem-1} leaves no hope to have $p < 2$
in the generalized DNLS equation and to
account the intersite cubic nonlinear terms at the same order as
the onsite cubic nonlinear terms. Thus, we have to conclude that
the generalized DNLS equations considered in \cite{ABDKK} and \cite{ST} (and
implicitly in \cite{CKKS93} and \cite{OJE03}) are invalid for potential $V(x)$ in 
(\ref{potential-eps}) in the tight-binding approximation as $\epsilon \to 0$.

\section{Reductions to the CNLS equation}

Let us now consider the potential $V(x)$ in the GP equation (\ref{INLSEdt})
without assumption (\ref{potential-eps}). Spectral bands are
generally of a finite size, so that we can simplify the GP
equation (\ref{INLSEdt}) if the bound state has small amplitude
near the band edge. This asymptotic reduction leads to the
continuous nonlinear Schr\"{o}dinger (CNLS) equation justified by
Busch {\em et al.} \cite{Sch3}.

To give main details, let $E_0$ be the lowest band edge of operator
$L = -\partial_x^2 + V(x)$ corresponding to the $2\pi$-periodic
$L^2$-normalized eigenfunction $\Psi_0 \in L^2_{\rm per}(0,2\pi)$.
Since the second solution
of $L \Psi = E_0 \Psi$ is linearly growing, the subspace ${\rm
Ker}(L - E_0 I) \subset L^2_{\rm per}(0,2\pi)$ is one-dimensional.
Looking at the Fredholm alternative condition for the
inhomogeneous equation
\begin{equation}\label{eq_psi1}
-\Psi_1''(x) + V(x) \Psi_1(x) - E_0 \Psi_1(x) = 2 \Psi'_0(x),
\end{equation}
we infer that there exists a unique $2\pi$-periodic function
$\Psi_1 \in L^2_{\rm per}(0,2\pi)$ in the orthogonal complement of
${\rm Ker}(L - E_0 I)$. If $V(x)$ is even on $\mathbb{R}$, then
$\Psi_0(x)$ is even and $\Psi_1(x)$ is odd on $\mathbb{R}$. In
addition, if $G(x)$ is an odd $2\pi$-periodic function, there exists
a unique odd $2\pi$-periodic solution of the inhomogeneous
equation
\begin{equation}\label{eq_psinl2}
-\Psi_2''(x) + V(x) \Psi_2(x) - E_0 \Psi_2(x) = - G(x) \Psi_0^{3}(x),
\end{equation}
that also lies in the orthogonal complement of ${\rm Ker}(L - E_0
I)$. Equipped with these facts, we are looking for an asymptotic
solution of the GP equation (\ref{INLSEdt}) using
the decomposition
\begin{eqnarray*}
\Psi(x,t) & = & \varepsilon^{1/2} \left( A(X,T) \Psi_0(x) +
\varepsilon \left( A_X(X,T) \Psi_1(x) + |A(X,T)|^2 A(X,T)
\Psi_2(x) \right) \right. \\
& \phantom{t} & \left. +\varepsilon^{2} \tilde{\Psi}(x,t) \right)
e^{-i E_0 t},
\end{eqnarray*}
where $\varepsilon$ is a small parameter, $X = \varepsilon x$ and
$T = \varepsilon^2 t$ are slow variables, and $\tilde{\Psi}(x,t)$
satisfies the time evolution equation
\begin{eqnarray*}
i \partial_t \tilde{\Psi} & = & (L - E_0) \tilde{\Psi} - i A_T \Psi_0 -
\varepsilon \left( A_{XT} \Psi_1 + ( |A|^2 A )_T
\Psi_2 \right) \nonumber\\ & \phantom{t} & - A_{XX} \left(
\Psi_0 +2 \Psi_1' \right) - 2 (|A|^2 A)_X \Psi_2' - \varepsilon \left( A_{XXX} \Psi_1 +
(|A|^2 A)_{XX} \Psi_2 \right) \\
& \phantom{t} & + G(x) \varepsilon^{-1} \left( |A \Psi_0 +
\varepsilon \left( A_X \Psi_1 + |A|^2 A \Psi_2\right)
+\varepsilon^{2} \tilde{\Psi}|^2  \right. \\
& \phantom{t} & \left. (A \Psi_0 + \varepsilon \left(
A_X \Psi_1 + |A|^2 A \Psi_2\right) +\varepsilon^{2} \tilde{\Psi}) - |A|^2 A \Psi_0^3 \right)
\end{eqnarray*}
Projecting the right-hand side to $\Psi_0$ and truncating at the
leading-order terms, we obtain the CNLS equation
\begin{equation}\label{eq_A}
i A_T = \alpha A_{XX} + \chi |A|^4 A + \gamma \left( |A|^2 A
\right)_X,
\end{equation}
where
\begin{eqnarray*}
\alpha & = & - 1 - 2 \int_{0}^{2\pi} \Psi_1'(x) \Psi_0(x) dx, \\
\chi &=& 3 \int_{0}^{2\pi} G(x) \Psi_0^3(x) \Psi_2(x) dx,\\
\gamma &=& -2 \int_{0}^{2\pi} \Psi_2'(x) \Psi_0(x) dx + \int_0^{2\pi} G(x) \Psi^3_0(x) \Psi_1(x) dx.
\end{eqnarray*}
Justification of the generalized CNLS equation (\ref{eq_A})
can be developed similarly to the work of Busch {\em et al.}
\cite{Sch3}. While it may seem that the generalized CNLS
equation (\ref{eq_A}) contains both the quintic and the cubic derivative terms,
we obtain that
\begin{eqnarray*}
\gamma & = & \int_{0}^{2\pi} \left( -2 \Psi_2'(x)
\Psi_0(x) + G(x) \Psi^3_0(x) \Psi_1(x) \right) dx \\
& = & -\int_{0}^{2\pi} \left( 2 \Psi_2'(x)
\Psi_0(x) + \Psi_1(x) \left( - \partial_x^2 + V(x) - E_0 \right) \Psi_2(x) \right) dx \\
& = & - \int_{0}^{2\pi} \left(  2 \Psi_2(x) \Psi_0'(x) + \Psi_2(x) \left( - \partial_x^2 +
V(x) - E_0 \right) \Psi_1(x) \right) dx =
0.
\end{eqnarray*}
Therefore, the generalized CNLS equation (\ref{eq_A}) is just the quintic CNLS equation
\begin{equation}\label{eq_B}
i A_T = \alpha A_{XX} + \chi |A|^4 A.
\end{equation}

For stationary solutions with $A(X,T) = a(X) e^{-i \Omega T}$,
where $\Omega$ and $a(X)$ are real-valued, we obtain the stationary
quintic NLS equation in the form
\begin{equation}\label{NLSQ}
\alpha a''(X) + \chi a^5(X) = \Omega a(X), \quad X \in
\mathbb{R}.
\end{equation}
Similarly to the case in the tight-binding approximation, we note
that $\alpha < 0$ and $\chi < 0$ for the semi-infinite gap since
$\alpha = -\frac{1}{2} E''(0) < 0$, where $E(k)$ is the energy band
function for the lowest energy band, and
$$
\chi = 3 \int_{0}^{2\pi} G(x) \Psi_0^3(x) \Psi_2(x) dx = -
3\int_0^{2\pi} \Psi_2(x) \left( - \partial_X^2 + V(x) - E_0
\right) \Psi_2(x) dx < 0.
$$
The stationary quintic NLS equation (\ref{NLSQ}) has a positive
definite soliton for ${\rm sign}(\alpha) = {\rm sign}(\chi)$ with
${\rm sign}(\Omega) = {\rm sign}(\alpha)$, that is for $\Omega < 0$ in
the semi-infinite gap.

\end{document}